\newtheorem{theorem}{Theorem}
\newtheorem{proposition}[theorem]{Proposition}
\newtheorem{corollary}[theorem]{Corollary}
\newtheorem{lemma}[theorem]{Lemma}
\theoremstyle{remark}
\newtheorem{remark}[theorem]{Remark}
\newtheorem{example}[theorem]{Example}
\theoremstyle{definition}
\newtheorem{definition}[theorem]{Definition}
\newtheorem{notation}[theorem]{Notation}
\newcommand{\eqdef}{:=}
\newcommand{\F}{\mathbb{F}}
\newcommand{\C}{\mathcal{C}}
\newcommand{\md}{\delta}
\newcommand{\gen}[1]{\langle #1 \rangle}
\newcommand{\Z}{\mathbb{Z}}
\newcommand{\gggg}{g_1,\ldots,g_r,
                   T^{\ell}(g_1),\ldots,T^{\ell}(g_r),\ldots,
                   T^{(m - 1)\ell}(g_1),\ldots,T^{(m - 1)\ell}(g_r)}
\newcommand{\diam}{\mathbin{\diamond}}
\newcommand{\syn}{\mathcal{S}}
\newcommand{\halfdis}{\lfloor (\delta - 1) / 2 \rfloor}
\DeclareMathOperator\first{\mathcal{F}}
\DeclareMathOperator\pr{pr}
\DeclareMathOperator\ev{ev}
\DeclareMathOperator\GL{GL}
\DeclareMathOperator\qbch{Q-BCH}
\DeclareMathOperator\im{Im}
\DeclareMathOperator\w{w}
\DeclareMathOperator\supp{Supp}
\title{On Quasi-Cyclic Codes as a Generalization of Cyclic Codes}
\author[1]{M. Barbier\thanks{morgan.barbier@lix.polytechnique.fr}} 
\author[2]{C. Chabot\thanks{christophechabotcc@gmail.com}}
\author[1]{G. Quintin\thanks{quintin@lix.polytechnique.fr}}
\affil[1]{\'{E}cole polytechnique,
  Laboratoire d'informatique (LIX),
                91128 Palaiseau Cedex,
                France}
\affil[2]{LJK,
  51 rue des math\'{e}matiques,
  Campus de St Martin d'H\`{e}res BP 53,
  38041 Grenoble Cedex 09,
  France}
\begin{document}
\sloppypar
\maketitle
  \begin{abstract}
    In this article we see quasi-cyclic codes as block cyclic
codes. We generalize some properties of cyclic codes to
quasi-cyclic codes. We show a one-to-one correspondence between
$\ell$-quasi-cyclic codes of length $m\ell$ and left ideals of
$M_{\ell}(\F_q)[X]/(X^m - 1)$. Then, we generalize BCH codes and
evaluation codes in this context. We study their parameters and
establish a key equation. Finally, we present a new
$[189,11,125]_{\F_4}$ code beating the known minimum distance for
fixed length and dimension. Many codes with good parameters beating
best known ones have been found from this latter.

  \end{abstract}

\section{Introduction}
\label{Sec:Intro}

  \subsection{Context}
  Many codes with best known minimum distances are quasi-cyclic codes
or derived from them \cite{LinSol2003,codetables}. This family of
codes is therefore very interesting. Quasi-cyclic codes were 
studied and applied in the context of McEliece's cryptosystems
\cite{McEli78,BerCayGabOtm2009} and Niederreiter's \cite{Nied86,
LiDengWang94}.
They permit to reduce the size of keys in opposition to Goppa
codes. However, since the decoding of random quasi-cyclic codes is
difficult, only quasi-cyclic alternant codes were proposed for the
latter cryptosystems. The high structure of alternant codes is
actually a weakness and two cryptanalyses were proposed in
\cite{FauOtmPerTil2010,UmaLea2010}.
For these reasons, studying the decoding methods and the general
properties of quasi-cyclic codes are interesting topics.

The structure of quasi-cyclic codes has been studied in different
ways. In \cite{LalFitz2001}, quasi-cyclic codes are regarded as
concatenation of cyclic codes, while in \cite{LinSol2001}, the
authors regard them as linear codes over an auxiliary ring. In
\cite{CayChaAbd2010}, the approach is more analogous to the cyclic
case. The authors consider the factorization of
$X^m - 1 \in M_\ell(F_q)[X]$ with reversible polynomials in order to
construct $\ell$-quasi-cyclic codes canceled by those polynomials and
called $\Omega(P)$-codes. This leads to the
construction of self-dual codes and codes beating known bounds.
But the factorization of univariate polynomials over a matrix ring
remains difficult. In \cite{Chabot2011} the author gives an
improved method for particular cases of the latter factorization
problem.

In this article, we prove, analogously to the cyclic case, a
one-to-one correspondence between $\ell$-quasi-cyclic codes of
length $m\ell$ and left ideals of $M_\ell(F_q)[X] / (X^m - 1)$.
We study the properties of quasi-cyclic codes and
propose to extend the definition of BCH and evaluation codes
to the context of quasi-cyclic codes. Namely, we define
\emph{quasi-BCH} and \emph{quasi-evaluation} codes. The natural notion
of \emph{folded} and \emph{unfolded} codes is presented for simplicity
and decoding
purposes. Finally, we exhibit a quasi-cyclic code whose parameters
are better than the previous known and 48 other codes derived from
the first one.

Subsection~\ref{Ssec:Rappel} is devoted to some recalls about
$\Omega(P)$-codes and definitions. Then in
Section~\ref{Sec:Classification} we prove interesting properties
about quasi-cyclic codes and, in particular, the correspondence
between left ideals and quasi-cyclic codes. Section~\ref{Sec:BCH}
deals with the definition, parameters and a decoding algorithm of
quasi-BCH codes. Finally, Section~\ref{Sec:Eval} introduces
quasi-evaluation codes and gives lower bounds on their parameters.

  \subsection{First definitions}
  \label{Ssec:Rappel}
  In this section, we fix a positive integer
$n$ and let $\C$ be a code of length $n$ over the finite field $\F_q$, \emph{i.e.}
a vector subspace of $\F_q^n$.

\begin{definition}[Quasi-cyclic codes]
  From now and until the end of this article
  we define $T:\F_q^n \rightarrow \F_q^n$ to be the left cyclic shift
  defined by:
  \begin{equation*}
    T(c_1, c_2, \dots, c_n) = (c_2,c_3,\dots,c_1).
  \end{equation*}
  Suppose that $\ell$ divides $n$. Then we call
  an \emph{$\ell$-quasi-cyclic code over $\F_q$ of length $n$} a
  code of length $n$ over $\F_q$ stable by $T^\ell$. If the context is clear we will
  simply say \emph{$\ell$-quasi-cyclic code}.
\end{definition}

Let $\ell$ be an integer, and $\alpha \in \F_{q^{\ell}}$ be such that
$(1,\alpha,\ldots,\alpha^{\ell - 1})$ is an $\F_q$-base of the vector
space $\F_{q^{\ell}}$. We define the \emph{folding} to be the $\F_q$-linear map
\begin{equation*}
  \begin{array}{rcl}
    \phi : \F_q^{\ell}    & \rightarrow & \F_{q^{\ell}}
                                          = \F_q[\alpha] \\
    (a_1,\ldots,a_{\ell}) & \mapsto     & a_1 + a_2 \alpha + \cdots +
                                          a_{\ell} \alpha^{\ell - 1}. \\
  \end{array}
\end{equation*}
The unfolding is the inverse $\F_q$-linear map
\begin{equation*}
  \begin{array}{rcl}
    \phi^{-1} : \F_{q^{\ell}} & \rightarrow & \F_q^{\ell} \\
    a = a_1 + a_2 \alpha + \cdots + a_{\ell} \alpha^{\ell - 1}
    & \mapsto & (a_1,a_2,\ldots,a_{\ell}).
  \end{array}
\end{equation*}
Let $m$ be a positive integer, $f : E \rightarrow F$ be any map of
sets. We denote by $f^{\times m}$ the map of sets
$f^{\times m} : E^m \rightarrow F^m$ such that
\mbox{$f^{\times m}(x_1,\ldots,x_m) = (f(x_1),\ldots,f(x_m))$}.

\begin{definition}[Folded and unfolded codes]
  Suppose that $n = m\ell$. We define the \emph{folded code} of $\C$
  to be $\phi^{\times m}(\C)$. Let $\C'$ be a code in
  $\F_{q^{\ell}}^m$. We define the \emph{unfolded code} of
  $\C'$ to be $(\phi^{-1})^{\times m}(\C')$.
\end{definition}

\begin{remark}
  Observe that a code $\C$ is $\ell$-quasi cyclic if and only if
  its folded $\C' = \phi^{\times m}(\C)$ is cyclic. But $\C'$
  is not necessarily $\F_{q^{\ell}}$-linear.
\end{remark}

\section{Properties of quasi-cyclic codes}
\label{Sec:Classification}
In the present section we generalize the results of
\cite[Theorem~1, page~190]{SloMacWil86} to quasi-cyclic codes.
We fix a positive integer $n$ and suppose that
$n = m\ell$ for two positive integers $m$ and $\ell$.

\subsection{The one-to-one correspondence}

It is well-known \cite[Theorem~1, page~190]{SloMacWil86} that there
is a one-to-one correspondence between cyclic codes of length $n$
over $\F_q$ and monic factors of $X^n - 1 \in \F_q[X]$ \emph{i.e.}
ideals of $\F_q[X] / (X^n - 1)$.
In \cite{CayChaAbd2010,Chabot2011} the authors start to exhibit
such a correspondence for quasi-cyclic codes. They show that there
is a correspondence between a subfamily of $\ell$-quasi-cyclic codes
of length $m\ell$ over $\F_q$ and reversible factors of
$X^n - 1 \in M_{\ell}(\F_q)[X]$.

The one-to-one correspondence between $\ell$-quasi cyclic codes
and left ideals of $M_{\ell}(\F_q)[X] / (X^m - 1)$ is a consequence
of the two following lemmas.

\begin{lemma}
\label{lem:mod_gen}
  Let $R$ be a commutative principal ring and $M$ be a free left
  module of finite rank $s$ over $R$. Then every submodule $N$ of $M$
  can be generated by at most $s$ elements.
\end{lemma}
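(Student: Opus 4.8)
The standard approach is induction on the rank $s$ of the free module $M$, mimicking the classical proof that submodules of free modules over a PID are free (but here we only need the weaker statement about the number of generators, which is what lets us get away with "principal ring" rather than "PID" — no integral domain hypothesis). Fix an isomorphism $M \cong R^s$ and write $\pi : M \to R$ for the projection onto the last coordinate.

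For the base case $s = 1$: a submodule $N \subseteq R$ is an ideal, hence principal by hypothesis, so it is generated by one element. For the inductive step, assume the claim for free modules of rank $s - 1$. Given a submodule $N \subseteq M$, consider $\pi(N) \subseteq R$, which is an ideal and therefore principal, say $\pi(N) = (a)$ for some $a \in R$. Pick $y \in N$ with $\pi(y) = a$. Now $N \cap \ker \pi$ is a submodule of $\ker \pi \cong R^{s-1}$, so by the inductive hypothesis it is generated by at most $s - 1$ elements $z_1, \ldots, z_t$ with $t \le s - 1$. I claim $N$ is generated by $y, z_1, \ldots, z_t$, which is at most $s$ elements: given $x \in N$, we have $\pi(x) = r a$ for some $r \in R$, so $x - r y \in N \cap \ker \pi$, hence $x - r y$ is an $R$-combination of the $z_i$, and therefore $x$ is an $R$-combination of $y$ and the $z_i$.

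The only point requiring a little care is that $\ker \pi$ really is free of rank $s - 1$ — this is immediate from the chosen decomposition $M \cong R^s$, with $\ker \pi$ identified with the span of the first $s - 1$ standard basis vectors. Since the paper will apply this with $R$ commutative, I do not need to worry about left/right module distinctions here, and commutativity is used implicitly when passing freely between "submodule" and the coordinate description. I expect no real obstacle: the argument is essentially the textbook one, and the main thing to get right is simply tracking that the generator count is $\le s$ rather than exactly $s$.
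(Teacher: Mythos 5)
Your proof is correct, and it is precisely the argument the paper has in mind: the paper's proof simply defers to an ``easy adaptation'' of Lang's theorem on submodules of free modules over a PID, whose proof is exactly your induction on the rank via the projection onto the last coordinate and principality of the image ideal. You also correctly identify the one point of adaptation, namely that only the generation count (not freeness) survives when $R$ is merely a principal ring rather than a domain.
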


\begin{proof}
  It is an easy adaptation of the proof of
  \cite[Theorem~7.1, page~146]{Lang2002}.
\end{proof}

\begin{lemma}
\label{lem:morita}
  Let $s$ be a positive integer and $R$ be a commutative principal
  ring. Then there is a one-to-one correspondence between the
  submodules of $R^s$ and the left ideals of $M_s(R)$.
\end{lemma}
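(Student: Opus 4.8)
The plan is to construct the correspondence explicitly using the classical Morita equivalence between $R$ and $M_s(R)$, made concrete by viewing $R^s$ as column vectors. First I would fix notation: regard $R^s$ as the set of column vectors, so that $M_s(R)$ acts on $R^s$ on the left by matrix multiplication, making $R^s$ a left $M_s(R)$-module. I would also introduce the row-vector module $R_{\mathrm{row}}^s$ as a right $M_s(R)$-module. The key algebraic fact is that $M_s(R) \cong R^s \otimes_R R_{\mathrm{row}}^s$ as $(M_s(R), M_s(R))$-bimodules, or more elementarily, that every matrix decomposes as a sum of rank-one matrices $v \cdot w$ with $v \in R^s$ a column and $w \in R_{\mathrm{row}}^s$ a row.

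The correspondence itself I would define as follows. Given a submodule $N \subseteq R^s$, associate to it the left ideal
\begin{equation*}
  \Phi(N) = \{ A \in M_s(R) : \text{every column of } A \text{ lies in } N \}.
\end{equation*}
This is visibly closed under left multiplication by arbitrary matrices (left multiplication takes $R$-linear combinations of columns) and under addition, hence is a left ideal. Conversely, given a left ideal $I \subseteq M_s(R)$, associate to it the submodule
\begin{equation*}
  \Psi(I) = \{ A e_1 : A \in I \} \subseteq R^s,
\end{equation*}
the set of first columns of elements of $I$; since $I$ is closed under left multiplication and addition, $\Psi(I)$ is an $R$-submodule of $R^s$. (One could equally take the $R$-span of all columns; a short argument using the permutation-like matrices $e_{1j}$ shows this coincides with the set of first columns.)

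The two remaining steps are to check $\Psi \circ \Phi = \mathrm{id}$ and $\Phi \circ \Psi = \mathrm{id}$. The first is nearly immediate: if $N$ is a submodule, then $\Phi(N)$ contains, for each $v \in N$, the matrix whose first column is $v$ and whose other columns are zero, so $\Psi(\Phi(N)) \supseteq N$; the reverse inclusion is by definition. The direction I expect to be the main obstacle is $\Phi(\Psi(I)) = I$: the inclusion $I \subseteq \Phi(\Psi(I))$ requires showing that \emph{every} column (not just the first) of a matrix $A \in I$ lies in $\Psi(I)$, which one obtains by noting $A e_{j1} \in I$ moves the $j$-th column into first position; and the inclusion $\Phi(\Psi(I)) \subseteq I$ requires reconstructing a matrix all of whose columns lie in $\Psi(I)$ from elements of $I$, which is done by writing such a matrix as $\sum_j A_j e_{1j}$ where $A_j \in I$ has the desired $j$-th column as its first column. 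Throughout, the only property of $R$ used is commutativity (needed so that $M_s(R)$ is an $R$-algebra and columns span an $R$-module cleanly); principality plays no role in this lemma and is carried along only because it is the standing hypothesis. I would close by remarking that the correspondence is inclusion-preserving in both directions, which is what makes it useful in combination with Lemma~\ref{lem:mod_gen}.
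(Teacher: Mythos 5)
Your construction is oriented the wrong way round, and this is not a cosmetic slip: with $R^s$ viewed as columns and $M_s(R)$ acting on the left, the $j$-th column of $AB$ is $A$ times the $j$-th column of $B$, whereas it is the \emph{rows} of $AB$ that are $R$-linear combinations of the rows of $B$. Hence your $\Phi(N)=\{A\in M_s(R):\text{every column of }A\text{ lies in }N\}$ is a \emph{right} ideal, not a left ideal; the parenthetical claim ``left multiplication takes $R$-linear combinations of columns'' is exactly what fails. Concretely, for $R$ a field, $s=2$ and $N=R\cdot(1,0)^{t}$, one has $e_{11}\in\Phi(N)$ but $e_{21}e_{11}=e_{21}\notin\Phi(N)$. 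The same confusion infects the key step $\Phi(\Psi(I))=I$: both inclusions are argued by multiplying elements of $I$ on the \emph{right} by the matrix units $e_{j1}$ and $e_{1j}$, an operation under which a left ideal is not closed. And the two maps really are not mutually inverse as you defined them: for the left ideal $I=M_2(k)e_{11}$ (all matrices whose second column is zero) the first columns of elements of $I$ exhaust $k^2$, so $\Psi(I)=k^2$ and $\Phi(\Psi(I))=M_2(k)\neq I$; your $\Psi$ cannot even distinguish this $I$ from the whole ring.

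The argument is rescued by transposing it, which is what the paper does: to a submodule $N\subseteq R^s$ associate the set of matrices all of whose \emph{rows} lie in $N$ (this is a left ideal precisely because the rows of $AB$ are $R$-combinations of the rows of $B$), and to a left ideal $I$ associate the $R$-span of all rows of all elements of $I$ --- equivalently the set of first rows, since $e_{1j}A\in I$ moves the $j$-th row of $A$ into the first row, and $\bigl(\sum_j r_j e_{1j}\bigr)A$ realizes any $R$-combination of the rows of $A$. With that orientation your matrix-unit bookkeeping goes through essentially verbatim, now using only left multiplications, and both composites are the identity. Your closing remarks stand: the correspondence is inclusion-preserving in both directions, and principality of $R$ plays no role in this lemma (it enters only through Lemma~\ref{lem:mod_gen}).
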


\begin{proof}
  Note that this is a particular case of the Morita equivalence
  for modules. See for example
  \cite[n\textsuperscript{o}4, page~99]{BourbakiAlgCom8}. This
  particular case can be proved directly.
  To a submodule $N \subseteq R^s$, we can build a left ideal
  of $M_s(R)$ whose elements have rows in $N$. Conversely,
  to a left ideal $I \subseteq M_s(R)$ we associate the
  submodule of $R^s$ generated by all the rows of all the elements
  of $I$. It is straightforward to check that these maps are
  inverse to each other.
\end{proof}

Note that $M_{\ell}(\F_q)[X] / (X^m - 1)$ and
$M_{\ell}(\F_q[X] / (X^m - 1))$ are isomorphic as rings and that
$R = \F_q[X] / (X^m - 1)$ is a commutative principal ring. By
Lemma~\ref{lem:mod_gen} any submodule of $R^{\ell}$ can be generated
by at most $\ell$ elements. Therefore by Lemma~\ref{lem:morita}
any left ideal of $M_{\ell}(R) = M_{\ell}(\F_q)[X] / (X^m - 1)$ is
principal.

\begin{theorem}
\label{thm:one-to-one}
  There is a one-to-one correspondence between $\ell$-quasi-cyclic
  codes over $\F_q$ of length $m\ell$
  and left ideals of $M_{\ell}(\F_q)[X] / (X^m - 1)$.
\end{theorem}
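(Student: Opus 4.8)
The plan is to identify $\ell$-quasi-cyclic codes of length $m\ell$ with certain submodules, and then apply the module/ideal correspondence already established. First I would set $R = \F_q[X]/(X^m-1)$ and observe that an $\ell$-quasi-cyclic code $\C \subseteq \F_q^{m\ell}$ is, by definition, stable under $T^\ell$. Grouping the $m\ell$ coordinates into $m$ blocks of length $\ell$, the shift $T^\ell$ acts by cyclically permuting these $m$ blocks. Hence, after identifying $\F_q^{m\ell}$ with $(\F_q^\ell)^m$ and then with $R^\ell$ in the obvious way (the $i$-th block contributes the degree-$(i-1)$ coefficient in each of the $\ell$ coordinates), the operator $T^\ell$ becomes multiplication by $X$ on $R^\ell$. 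Therefore a subspace $\C$ is $\ell$-quasi-cyclic if and only if the corresponding $\F_q$-subspace of $R^\ell$ is stable under multiplication by $X$, i.e. is an $R$-submodule of $R^\ell$. I would phrase this as an explicit bijection between $\ell$-quasi-cyclic codes of length $m\ell$ and $R$-submodules of $R^\ell$.

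Next I would invoke the ring isomorphism $M_\ell(\F_q)[X]/(X^m-1) \cong M_\ell(R)$ noted just before the theorem, together with Lemma~\ref{lem:morita}, which gives a one-to-one correspondence between submodules of $R^\ell$ and left ideals of $M_\ell(R)$. Composing the two bijections yields the desired one-to-one correspondence between $\ell$-quasi-cyclic codes over $\F_q$ of length $m\ell$ and left ideals of $M_\ell(\F_q)[X]/(X^m-1)$. (Lemma~\ref{lem:mod_gen} is not strictly needed for the bijection itself, but it is what makes the correspondence useful, since it shows every such left ideal is principal.)

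The main obstacle, such as it is, is bookkeeping rather than mathematics: one must fix conventions carefully so that the $\F_q$-linear isomorphism $\F_q^{m\ell} \xrightarrow{\sim} R^\ell$ really does intertwine $T^\ell$ with scalar multiplication by $X$. The subtlety is the ordering — whether a codeword $(c_1,\dots,c_{m\ell})$ is split as $m$ consecutive blocks of length $\ell$ (so $T^\ell$ shifts blocks) and then each of the $\ell$ "coordinate positions" is read off across the $m$ blocks to form an element of $R = \F_q[X]/(X^m-1)$. Once this indexing is pinned down, the claim $T^\ell \leftrightarrow X\cdot$ is immediate, and the rest is formal composition of the stated lemmas. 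I would include a short explicit description of the map to make the intertwining transparent, and then simply chain the bijections.
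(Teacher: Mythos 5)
Your proposal matches the paper's proof: the paper defines exactly the map you describe (sending a codeword to the $\ell$-tuple of polynomials obtained by reading each coordinate position across the $m$ blocks), observes that it carries $\ell$-quasi-cyclic codes bijectively onto $R$-submodules of $R^\ell$ with $R = \F_q[X]/(X^m-1)$, and concludes via Lemma~\ref{lem:morita}. Your remark that Lemma~\ref{lem:mod_gen} is only needed for principality, not for the bijection, is also consistent with how the paper uses it.
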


\begin{proof}
  Let $g = (g_{11},\ldots,g_{1\ell},g_{21},\ldots,g_{2\ell},\ldots,
       g_{m1},\ldots,g_{m\ell}) \in \F_q^{m\ell}$. We associate
  to $g$ the element $\varphi(g) \in (\F_q[X] / (X^m - 1))^{\ell}$
  defined by
  \begin{multline*}
    \varphi(g) =
    \left(
      g_{11} + g_{21} X + \cdots + g_{m1} X^{m - 1} ;
    \right. \\
    g_{12} + g_{22} X + \cdots + g_{m2} X^{m - 1} ; \ldots ; \\
    \left.
      g_{1\ell} + g_{2\ell} X + \cdots + g_{m\ell} X^{m - 1}
    \right).
  \end{multline*}
  Then $\varphi$ induces a one-to-one correspondence between
  $\ell$-quasi-cyclic codes of length $m\ell$ over $\F_q$
  and submodules of $(\F_q[X] / (X^m - 1))^{\ell}$.
  The theorem follows by Lemma~\ref{lem:morita}.
\end{proof}

Let $\pr_{i,j}$ be the projection of the $i,i + 1,\ldots,j$ coordinates:
\begin{equation*}
  \begin{array}{rcl}
    \pr_{i,j} : \F_q^n & \longrightarrow & \F_q^{j - i + 1} \\
    (x_1,\ldots,x_n)   & \longmapsto     &
                      (x_i,x_{i + 1},\ldots,x_{j - 1},x_j).
  \end{array}
\end{equation*}
We have the following obvious lemma:

\begin{lemma}
\label{LemmaBlockRank}
  Let $\C$ be an $\ell$-quasi-cyclic code over $\F_q$ of dimension $k$
  and length $m\ell$.
  Then there exists an integer $r$ such that $1 \leq r \leq k$ and
  for any generator matrix $G$ of $\C$ and $0 \leq i \leq m - 1$,
  the rank of the $i\ell + 1,i\ell + 2,\ldots,(i + 1)\ell$ columns of
  $G$ is $r$.
\end{lemma}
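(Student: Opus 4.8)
The key point is to translate the statement into the language of the one-to-one correspondence established in Theorem~\ref{thm:one-to-one}, where it becomes almost tautological. The plan is to first reduce to a single generator matrix, then use the module structure, and finally handle the quantifier over all generator matrices.

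\medskip

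First I would fix a generator matrix $G$ of $\C$ and consider, for $0 \le i \le m-1$, the $\ell$ columns indexed by $i\ell+1,\ldots,(i+1)\ell$. Write $r_i$ for the rank of this block of columns. Because $\C$ is stable under $T^\ell$, applying $T^\ell$ to the rows of $G$ produces another generator matrix of $\C$, and the block of columns $i\ell+1,\ldots,(i+1)\ell$ of $T^\ell G$ equals the block of columns $(i+1)\ell+1,\ldots,(i+2)\ell$ of $G$ (indices mod $m\ell$). Since $T^\ell G$ and $G$ have the same row space, corresponding column-blocks have the same rank, giving $r_i = r_{i+1}$ for all $i$; hence all the $r_i$ are equal to a common value $r$.

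\medskip

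Next I would show $r$ is independent of the choice of $G$ and that $1 \le r \le k$. Independence is immediate once we know $r$ is an invariant of the code: indeed, via $\varphi$ the code corresponds to a submodule $N \subseteq R^\ell$ with $R = \F_q[X]/(X^m-1)$, and the $i$-th column-block of a generator matrix of $\C$ describes, after reindexing, the $\F_q$-span of the coefficient vectors appearing in the $i$-th coordinate of elements of $N$; by the shift-invariance just established this span has the same $\F_q$-dimension for every $i$, and this dimension is manifestly intrinsic to $N$, hence to $\C$. The bound $r \le k$ holds because a block of columns has rank at most the total number of rows we may take to be $k = \dim \C$; and $r \ge 1$ because if some, hence every, column-block were zero then $\C = \{0\}$, contradicting $k \ge 1$ (which is implicit in speaking of a generator matrix, but I would state the convention that the zero code is excluded or note $r=0$ occurs only then).

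\medskip

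The main obstacle, such as it is, is bookkeeping with the indices: one must be careful that the cyclic shift $T^\ell$ permutes the $m$ column-blocks cyclically and does \emph{not} permute coordinates within a block, so that equality of ranks of corresponding blocks of $G$ and $T^\ell G$ is exactly what is needed. Everything else is elementary linear algebra. I would therefore present the argument in the order: (1) the block-cyclic-shift observation giving $r_i = r_{i+1}$; (2) conclude all $r_i$ equal for a fixed $G$; (3) observe the common value is a code invariant, hence the same for every generator matrix; (4) record the bounds $1 \le r \le k$.
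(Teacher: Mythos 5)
The paper offers no proof of this lemma at all (it is introduced with ``We have the following obvious lemma''), so there is nothing to match your argument against; what you have written is a correct filling-in of the details, and its core is exactly the natural one: since $T^\ell$ is an invertible map with $T^\ell(\C)=\C$, applying it to the rows of $G$ yields another generator matrix whose $i$-th column block is the $(i+1)$-th column block of $G$, and matrices with the same row space have equal ranks on any fixed set of columns, whence $r_i=r_{i+1}$ for all $i$ modulo $m$; the bounds $1\le r\le k$ are handled correctly (with the harmless convention excluding the zero code). One aside should be repaired, though it does not affect the validity of the proof: your justification of generator-matrix independence via the map $\varphi$ misidentifies the correspondence. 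The $i$-th column block of a codeword collects the coefficients of $X^i$ in all $\ell$ coordinates of its image under $\varphi$, whereas the $j$-th coordinate of the module $N=\varphi(\C)$ collects the codeword positions congruent to $j$ modulo $\ell$ (a transversal across blocks), so the span you describe is not the column-block span. The detour is unnecessary anyway: the rank of columns $i\ell+1,\ldots,(i+1)\ell$ of any generator matrix equals $\dim_{\F_q}\pr_{i\ell+1,(i+1)\ell}(\C)$ (equivalently, two generator matrices differ by left multiplication by an invertible $k\times k$ matrix, the very fact you already invoked in step (1)), which makes the invariance immediate and lets you delete the module-theoretic sentence entirely.
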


\begin{definition}[Block rank]
  Taking the notation of Lemma~\ref{LemmaBlockRank},
  we call the integer $r$ the \emph{block rank} of $\C$. Note that
  $r$ depends only on $\C$ and not on any particular generator matrix
  of $\C$.
\end{definition}

\subsection{The generator polynomial of an $\ell$-quasi-cyclic code}

In this subsection we fix an $\ell$-quasi-cyclic code $\C$ over $\F_q$.
If $\ell = 1$, then $\C$ is a cyclic code of length $n$ and a
generator matrix of $\C$ can be given
\cite[Theorem~1,~(e), page~191]{SloMacWil86} by
\begin{equation}
\label{equ:cyclic-gen}
  \begin{pmatrix}
    g(X)  &       &       & \\
          & Xg(X) &       & \\
          &       & \dots & \\
          &       &       & X^{n - \deg g}g(X)
  \end{pmatrix},
\end{equation}
where $g(X) \in \F_q[X]$ is the generator polynomial of $\C$.
The block rank of $\C$ is $1$ and we see that we can write
a generator matrix of $\C$ with only $1$ vector and its shifts
(by $T^{\ell} = T$). The natural generalization of this result
for quasi-cyclic codes is done using the block rank.

Let $r$ be the block rank of $\C$, the following algorithm computes
a basis of $\C$ from $r$ vectors of $\C$ and their shifts.
We call the \emph{first index} of a nonzero vector
$x = (x_1,\ldots,x_{m\ell})$ the least integer $0 \leq i \leq m - 1$
such that \mbox{$(x_{i\ell + 1},\ldots,x_{(i + 1)\ell}) \neq 0$} and denote
it by $\first(x) = \first(x_1,\ldots,x_{m\ell})$.
Let
\begin{equation*}
  \begin{array}{rlc}
  p:\F_q^{m\ell} &\longrightarrow & \F_q^{\ell}\\
  x = (x_1,\ldots,x_{m\ell}) & \longmapsto &
        (x_{i\ell + 1},\ldots,x_{(i + 1)\ell}),
  \end{array}
\end{equation*}
where $i = \first(x_1,\ldots,x_n)$ if $x \neq 0$ and $p(0) = 0$.

\begin{algorithm}
\label{al:basis-block-rank}
\caption{Basis computation with the block rank}
\begin{algorithmic}[1]
\REQUIRE A generator matrix $G$ of $\C$.
\ENSURE A generator matrix formed by $r$ rows from $G$
        and some of their shifts.
\STATE $G' \gets $ a row echelon form of $G$.
\STATE Denote by $g_1,\ldots,g_k$ the rows of $G'$.
\STATE $M \gets \max \{ \first(g_i) : i \in \{ 0,\ldots,m - 1 \} \}$.
\STATE $B_M' \gets \emptyset$.
\STATE $G_{M + 1} \gets \emptyset$.
\FOR{$j = M \to 0$}
  \STATE $B_j \gets$ $\{ g_i : i \in \{ 1,\ldots,k \}
         \text{ and } \first(g_i) = j \}$.
  \FOR{each element $x$ of $B_j$}
    \IF{$p(B_j') \cup \{ p(x) \}$ are independent}
      \STATE $B_j' \gets B_j' \cup \{ x \}$.
    \ENDIF
  \ENDFOR
  \STATE $G_j \gets G_{j + 1} \cup B_j'$.
  \STATE $B_{j - 1}' \gets T^{\ell}(B_j')$.
\ENDFOR
\RETURN $G_0$.
\end{algorithmic}
\end{algorithm}

Note that Algorithm~\ref{al:basis-block-rank} applied to a cyclic
code, \emph{i.e.} $\ell = 1$, returns exactly the
matrix~\eqref{equ:cyclic-gen} and we can deduce the generator
polynomial of $\C$ at the cost of the computation of a row echelon
form of any generator matrix of $\C$.

\begin{proposition}  
  Algorithm~\ref{al:basis-block-rank} works correctly as expected and
  returns a generator matrix $G$ of $\C$ made of $r$ linearly
  independent vectors of $\C$ and some of their shifts.
\end{proposition}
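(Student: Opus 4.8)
The plan is to establish three facts about the returned matrix $G_0$: that its rows span $\C$, that they are linearly independent, and that the \emph{unshifted} rows among them number exactly $r$ (so the remaining rows are some of their $T^{\ell}$-shifts). Together these show $G_0$ has the claimed shape and, having $k = \dim\C$ rows, is a genuine generator matrix of $\C$. Write the rows of $G'$ as $g_1,\dots,g_k$; for a code vector $x$ let $x^{(i)} = (x_{i\ell+1},\dots,x_{(i+1)\ell})$ be its $i$-th block, so $p(x) = x^{(\first(x))}$, and set $\C_{\geq j} = \{c\in\C : c^{(0)}=\dots=c^{(j-1)}=0\}$. Two elementary observations will be used repeatedly. (i) Since $G'$ is in row echelon form, every nonzero $c=\sum_i\lambda_i g_i$ has its leading coordinate at the leading position of some $g_i$, hence $\first(c)\leq M$; thus $\C_{\geq M+1} = \{0\}$. (ii) If $\first(x)\geq 1$ then $\first(T^{\ell}x)=\first(x)-1$ and $p(T^{\ell}x)=p(x)$, so $p$ is unchanged along the shifts the algorithm produces until the first index drops to $0$; moreover each row of $B_j'$ has first index exactly $j$ (true for the newly inserted rows of $B_j\subseteq\{g_1,\dots,g_k\}$ by definition, and preserved by the shift step), so $G_0 = \bigsqcup_{j=0}^{M}B_j'$ is a disjoint union whose pieces are distinguished by first index, and $x^{(j)}=p(x)$ for $x\in B_j'$.

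First I would record the loop invariant, proved by downward induction on $j$: at the end of iteration $j$ the family $\{p(x):x\in B_j'\}$ is linearly independent with span $W_j := \langle p(g_i) : \first(g_i)\geq j\rangle$. The base case is iteration $M$, where $B_M'$ becomes a maximal independent subfamily of $\{p(g_i):\first(g_i)=M\}$ since $B_M = \{g_i:\first(g_i)=M\}$; the inductive step uses (ii) to see that re-initialising $B_j'$ as $T^{\ell}(B_{j+1}')$ keeps independence and the span $W_{j+1}$, and then the inner greedy loop extends it to a basis of $W_{j+1}+\langle p(g_i):\first(g_i)=j\rangle = W_j$. Hence $|B_j'|=\dim W_j$, the number $|A_j|$ of rows newly inserted at step $j$ is $\dim W_j - \dim W_{j+1}$ (with $W_{M+1}=0$), and the unshifted rows occurring in $G_0$ form the set $A=\bigcup_j A_j$ with $|A| = \sum_j|A_j| = \dim W_0$ by telescoping. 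It then remains to identify $\dim W_0$ with the block rank: on one hand $T^{\first(g_i)\ell}(g_i)\in\C$ has first index $0$ and zeroth block $p(g_i)$, so $W_0\subseteq\pr_{1,\ell}(\C)$; on the other hand any element of $\pr_{1,\ell}(\C)$ equals $c^{(0)} = \sum_{i:\first(g_i)=0}\lambda_i p(g_i)\in W_0$ for $c=\sum_i\lambda_i g_i$, since rows of first index $>0$ have zero zeroth block. Thus $W_0 = \pr_{1,\ell}(\C)$ and $|A| = \dim W_0 = r$ by Lemma~\ref{LemmaBlockRank}.

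For linear independence, given $\sum_{x\in G_0}\nu_x x = 0$ I would read it block by block: block $0$ receives contributions only from $B_0'$, giving $\sum_{x\in B_0'}\nu_x p(x)=0$, hence $\nu_x=0$ on $B_0'$ by the invariant; then block $1$ receives contributions only from $B_1'$ (those of $B_0'$ being already killed), so $\nu_x=0$ on $B_1'$; and so on up to block $M$, whence all $\nu_x$ vanish. For the spanning property, since $G_0\subseteq\C$ it suffices to prove $\C_{\geq j}\subseteq\langle G_0\rangle$ by downward induction on $j$, with base case $j=M+1$ given by (i). Take $c=\sum_i\lambda_i g_i\in\C_{\geq j}$. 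An upward induction on $a=0,1,\dots,j-1$ using the staircase shape of $G'$ shows $\lambda_i=0$ whenever $\first(g_i)<j$: at step $a$, the rows of first index exactly $a$ are independent when restricted to block $a$ (their leading coordinates lie at distinct positions inside that block), and $c^{(a)} = \sum_{i:\first(g_i)=a}\lambda_i g_i^{(a)}=0$. Consequently $c^{(j)} = \sum_{i:\first(g_i)=j}\lambda_i p(g_i)\in W_j$, so $c^{(j)} = \sum_{x\in B_j'}\mu_x p(x)$ for some scalars $\mu_x$; then $c-\sum_{x\in B_j'}\mu_x x\in\C_{\geq j+1}$ (both $c$ and each $x\in B_j'$ vanish on blocks $0,\dots,j-1$, and block $j$ cancels), hence lies in $\langle G_0\rangle$ by the inductive hypothesis, and since $B_j'\subseteq G_0$ we conclude $c\in\langle G_0\rangle$. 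Taking $j=0$ gives $\C=\langle G_0\rangle$, and then $|G_0| = \dim\C$.

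I expect the delicate point to be the spanning step: one must verify that the greedy inner loop, which only inspects the leading blocks $p(x)$, nevertheless retains enough to reconstruct every codeword, and this is exactly where the row echelon form is indispensable — it is what forces the coefficients on the low-first-index rows to vanish and lets the induction on first index close. The other place where something genuine (rather than bookkeeping) has to be said is the equality $\dim W_0 = r$, which is what ties the count of unshifted generators to the block rank.
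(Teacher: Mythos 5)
Your proof is correct and follows essentially the same route as the paper's: a descending induction over the loop index $j$, maintaining that the leading-block projections $p(B_j')$ remain linearly independent (and span what they should), together with a telescoping count of the newly inserted rows that is tied to the block rank $r$ via Lemma~\ref{LemmaBlockRank}. The differences are presentational: you spell out two points the paper treats tersely, namely the row-echelon-form argument showing that a codeword vanishing on blocks $0,\ldots,j-1$ lies in the span of the rows $g_i$ with $\first(g_i)\geq j$ (the paper's item~5), and the identification of the number of unshifted rows with $r$ through $W_0=\pr_{1,\ell}(\C)$.
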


\begin{proof}
  We will prove by descending induction on $j$ that:
  \begin{enumerate}
    \item $B_j' \supseteq T^{\ell} (B_{j + 1}')
                \supseteq \dots
                \supseteq T^{(M - j)\ell} (B_M')$.
    \item $\#B_j' \leq r$.
    \item The vectors of $B_j'$ are linearly independent.
    \item The vectors of $G_j$ are linearly independent.
    \item $\gen{G_j} = \gen{g_i : i \in \{ 1,\ldots,k \}
                                  \text{ and } \first(g_i) \geq j }$.
  \end{enumerate}
  Let $j = M$. By step~3, we have $B_M \neq \emptyset$. Item~1
  is trivially satisfied. By Lemma~\ref{LemmaBlockRank},
  $\#B_M \leq r$ and item~2 is satisfied. As
  $G_{M + 1} = B_M' = \emptyset$ then
  $G_M = B_M' = B_M = \{ g_i : i \in \{ 1,\ldots,k \}
                               \text{ and } \first(g_i) \geq M \}$ and
  items~3 to~5 are satisfied.

  Suppose that $j < M$ and that items~1 to~5 are satisfied for
  $i = j + 1,\ldots,M$. First note that $B_j \neq \emptyset$.
  If we had $B_j = \emptyset$ then, as $G'$ is in row echelon form,
  $g_1,\ldots,g_k,T^{(M - j)\ell}(g_k)$ would be linearly independent
  which is a contradiction.

  Items~1 and~3 are satisfied by steps~7, 9 and~10 of the algorithm. By
  Lemma~\ref{LemmaBlockRank} and step~9, item~2 is satisfied. For all
  $x \in G_{j + 1}$, we have $\first(x) \geq j + 1$, thus, by
  item~3, the elements of $G_j$ are linearly independent and item~4
  is satisfied. Let $g$ be a vector of $G'$ such that $\first(g) = j$, then
  the construction of $B_j'$ implies that we have
  \begin{equation*}
    \first \left( g - \sum_{u \in B_j'} \mu_u u \right) \geq j + 1
  \end{equation*}
  where $\mu_u \in \F_q$ for $u \in B_j'$. Then by item~5 of the
  inductive hypothesis, we have
  \begin{equation*}
    \left( g - \sum \mu_u u \right) \in G_{j + 1}.
  \end{equation*}
  Thus we have
  $\gen{G_j} = \gen{ g_i : i \in \{1,\ldots,k\}
                           \text{ and } \first(g_i) \geq j }$ and item~5
  is satisfied.

  As a consequence of the previous induction, $G_0$ is constituted of
  linearly independent vectors and generates
  $\gen{ g_i : i \in \{1,\ldots,k\} \text{ and } \first(g_i) \geq 0 }
   = \C$ by item~5. By Lemma~\ref{LemmaBlockRank} we must have
  exactly $r$ vectors $g \in G_0$ such that $\first(g) = 0$. Thus
  by items~1 and~2 we have
  \begin{equation*}
    r = \#B_0' = \sum_{\lambda = 0}^M
                 \#\left(
                   B_{\lambda}' \setminus T^{\ell} (B_{\lambda + 1}')
                 \right)
  \end{equation*}
  which shows that $G_0$ is constituted of $r$ linearly independent
  vectors of $\C$ and some of their shifts.
\end{proof}

\begin{corollary}
\label{cor:gen}
  There exist $g_1,\ldots,g_r$ linearly independent vectors of $\C$
  such that $\gggg$ span $\C$.
  If we denote by $g_{i,j}$ the $j$'th coordinate of $g_i$ and let
  \begin{equation*}
    G_i =
    \begin{pmatrix}
      g_{1,i\ell + 1} & \dots & g_{1,(i + 1)\ell} \\
      \vdots          &       & \vdots \\
      g_{r,i\ell + 1} & \dots & g_{r,(i + 1)\ell} \\
                      & 0     &
    \end{pmatrix}
    \in M_{\ell}(\F_q)
  \end{equation*}
  and
  \begin{equation*}
    g(X) = \frac{1}{X^{\nu}} \sum_{i = 0}^{m - 1} G_i X^i
           \in M_{\ell}(\F_q)[X],
  \end{equation*}
  where $\nu$ is the least integer such that $G_i \neq 0$,
  then $\C$ corresponds to the left ideal $\gen{g(X)}$
  by Theorem~\ref{thm:one-to-one}.
\end{corollary}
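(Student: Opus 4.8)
The plan is to verify the claim in two stages: first, that the $r$ vectors $g_1,\ldots,g_r$ supplied by the preceding proposition give rise, via the matrices $G_i$, to a polynomial $g(X)$ whose left multiples (taken modulo $X^m-1$) are exactly the image under $\varphi$ of the code $\C$; and second, that this image is the submodule of $(\F_q[X]/(X^m-1))^{\ell}$ corresponding to $\C$ under Theorem~\ref{thm:one-to-one}, so that translating back through Lemma~\ref{lem:morita} identifies $\C$ with the left ideal $\gen{g(X)}$. The second stage is essentially bookkeeping: one unwinds the correspondence $\varphi$ of Theorem~\ref{thm:one-to-one} together with the submodule-to-left-ideal map of Lemma~\ref{lem:morita}, and checks that the rows of the generator matrix of the left ideal are precisely the $\varphi$-images of a spanning set of $\C$.

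First I would make explicit the dictionary between the shift $T^{\ell}$ on $\F_q^{m\ell}$ and multiplication by $X$ on $(\F_q[X]/(X^m-1))^{\ell}$ induced by $\varphi$: applying $T^{\ell}$ to a codeword cyclically permutes the length-$\ell$ blocks, which under $\varphi$ is exactly multiplication of each coordinate polynomial by $X$ modulo $X^m - 1$. Hence the span of $\gggg$ maps under $\varphi$ to the $R$-submodule of $R^{\ell}$ generated by $\varphi(g_1),\ldots,\varphi(g_r)$, where $R = \F_q[X]/(X^m-1)$. Next I would identify, for each $i$, the $i$-th ``column of blocks'' $G_i \in M_{\ell}(\F_q)$ with the coefficient of $X^i$ appearing in the $\varphi$-expansion: the rows of $\sum_i G_i X^i$ are, up to reindexing, the tuples $\varphi(g_1),\ldots,\varphi(g_r)$ padded with zero rows, so that the row span of $\sum_i G_i X^i$ as an element of $M_{\ell}(R)$ equals the $R$-submodule generated by $\varphi(g_1),\ldots,\varphi(g_r)$. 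Dividing by $X^{\nu}$ (a unit in $R$, since $X$ is invertible modulo $X^m-1$) does not change the left ideal generated, which disposes of the normalization factor $1/X^{\nu}$.

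The one genuine point requiring care is to see that the left ideal $\gen{g(X)}$ in $M_{\ell}(R)$ matches, under the Morita correspondence of Lemma~\ref{lem:morita}, the submodule of $R^{\ell}$ that $\varphi$ assigns to $\C$. Here I would invoke the explicit description in the proof of Lemma~\ref{lem:morita}: a left ideal corresponds to the submodule generated by all rows of all its elements. Since $\gen{g(X)}$ consists of all $A(X) g(X)$ for $A(X) \in M_{\ell}(R)$, and the rows of $A(X) g(X)$ are $R$-linear combinations of the rows of $g(X)$, the submodule of all such rows is exactly the $R$-span of the rows of $g(X)$, namely the $R$-span of $\varphi(g_1),\ldots,\varphi(g_r)$, which by the previous paragraph is $\varphi$ applied to $\C$. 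The main obstacle, then, is not any deep argument but keeping the several indexing conventions aligned --- the block index $i$ running $0,\ldots,m-1$ in $g(X)$, the coordinate index $j$ running $1,\ldots,\ell$ in $\varphi$, and the row/column roles in $M_{\ell}(R)$ --- and confirming that the padding by zero rows in the definition of $G_i$ is harmless because it contributes nothing to the generated submodule. Once these are matched up, the corollary follows immediately from Theorem~\ref{thm:one-to-one} and Lemma~\ref{lem:morita}.
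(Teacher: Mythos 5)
Your proposal is correct and follows essentially the route the paper intends: the corollary is left as an immediate unwinding of the proposition on Algorithm~\ref{al:basis-block-rank} (which supplies $g_1,\ldots,g_r$ and their shifts), the correspondence $\varphi$ of Theorem~\ref{thm:one-to-one}, and the explicit row-span description in Lemma~\ref{lem:morita}, exactly as you lay out. The only quibble is directional: since $T$ is the \emph{left} shift, $\varphi(T^{\ell}(g)) = X^{m-1}\varphi(g)$ rather than $X\varphi(g)$, but as $X$ is a unit in $\F_q[X]/(X^m-1)$ this does not affect any of the generation statements.
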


\begin{corollary}
  Taking the notation of the proof of Theorem~\ref{thm:one-to-one},
  the submodule $\varphi(\C) \subseteq (\F_q[X] / (X^m - 1))^{\ell}$
  is generated by $r$ elements as an $\F_q[X] / (X^m - 1)$-module
  but cannot be generated by less that $r$ elements. If $\C$ is a
  cyclic code then we have $r = 1$ and we find the classical result
  about cyclic codes.
\end{corollary}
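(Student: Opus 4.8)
The plan is to prove the two assertions separately: that $\varphi(\C)$ is generated by $r$ elements over $R = \F_q[X]/(X^m-1)$, and that it cannot be generated by fewer. The first is essentially a reformulation of Corollary~\ref{cor:gen}; the second carries the real content. The common tool is the elementary identity $\varphi(T^{\ell}(x)) = X^{-1}\varphi(x)$ for $x \in \F_q^{m\ell}$, which one reads off directly from the formula defining $\varphi$ in the proof of Theorem~\ref{thm:one-to-one}: shifting every length-$\ell$ block of $x$ one step to the left multiplies each polynomial coordinate of $\varphi(x)$ by $X^{-1} = X^{m-1}$ modulo $X^m-1$. Iterating, $\varphi(T^{t\ell}(x)) = X^{-t}\varphi(x)$, and since $1, X, \dots, X^{m-1}$ is an $\F_q$-basis of $R$, the image under $\varphi$ of the $\F_q$-span of $x, T^{\ell}(x), \dots, T^{(m-1)\ell}(x)$ is precisely the cyclic submodule $R\,\varphi(x)$. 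Feeding in the $g_1, \dots, g_r \in \C$ furnished by Corollary~\ref{cor:gen}, for which $\gggg$ span $\C$ over $\F_q$, this gives $\varphi(\C) = \sum_{i=1}^{r} R\,\varphi(g_i)$, so $r$ generators suffice.

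The lower bound is the step I expect to be the main obstacle. The natural attempt is to reverse the computation above: if $\varphi(\C) = \sum_{i=1}^{s} R\,w_i$, set $u_i = \varphi^{-1}(w_i)$; then the vectors $T^{t\ell}(u_i)$ ($0 \le t \le m-1$, $1 \le i \le s$) span $\C$ over $\F_q$, so $\pr_{1,\ell}(\C)$ — which has dimension $r$ by Lemma~\ref{LemmaBlockRank} — is spanned by the length-$\ell$ blocks of the $u_i$, of which there are at most $sm$. This crude count only yields $r \le sm$, not $r \le s$, so it has to be sharpened: one must bound how much a single $R$-generator together with its $m-1$ cyclic shifts can contribute to $\pr_{1,\ell}(\C)$. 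The device I would try is localization — pass to the residue fields $R/\mathfrak{m}$ at the maximal ideals $\mathfrak{m}$ of $R$ (equivalently the irreducible factors of $X^m-1$ over $\F_q$), express the block rank $r$ through the dimensions $\dim_{R/\mathfrak{m}}\bigl(\varphi(\C)/\mathfrak{m}\,\varphi(\C)\bigr)$, and use that $R$ is semilocal, so the least number of $R$-generators of $\varphi(\C)$ equals the maximum of these local dimensions. Making this identification precise — and, in view of the slack in the crude bound $r \le sm$, confirming that it really returns $r$ rather than something smaller — is the crux of the proof and the step I would examine most carefully.

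Finally, the cyclic case is immediate: when $\ell = 1$ one has $n = m$, every nonzero single column of a generator matrix has rank $1$, so $r = 1$, and $\varphi(\C)$ is then an ideal of $\F_q[X]/(X^n - 1)$, necessarily principal since $R$ is a principal ring (cf.\ Lemma~\ref{lem:morita} and the discussion preceding Theorem~\ref{thm:one-to-one}); its single generator is the usual generator polynomial of $\C$.
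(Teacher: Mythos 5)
The first half of your argument is sound and is exactly the paper's route: the identity $\varphi(T^{\ell}(x)) = X^{-1}\varphi(x)$ (correct as you state it) together with Corollary~\ref{cor:gen} shows that $\varphi(\C)=\sum_{i=1}^{r}R\,\varphi(g_i)$ with $R=\F_q[X]/(X^m-1)$, so $r$ generators suffice; the remark about the cyclic case is also fine. The gap is the lower bound, which you explicitly leave as a sketch --- and here the problem is not merely that the step is unfinished, but that it cannot be finished: your own localization plan, carried through, refutes the claim rather than proving it. Since $R$ is a finite product of local rings (one factor $\F_q[X]/(f^e)$ per irreducible factor $f$ of $X^m-1$), the least number of $R$-generators of $\varphi(\C)$ is indeed $\max_{\mathfrak{m}}\dim_{R/\mathfrak{m}}\bigl(\varphi(\C)/\mathfrak{m}\,\varphi(\C)\bigr)$, but this maximum has no reason to coincide with the block rank $r$, which is the $\F_q$-dimension of the projection of $\C$ onto a single block --- an $\F_q$-linear, not $R$-linear, invariant. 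The slack in your crude bound $r\le sm$ is real: the $m$ shifts of a single module generator can contribute several independent block projections.

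Concretely, take $q=2$, $m=3$, $\ell=2$, $g=(1,0,1,1,1,1)$ and let $\C$ be the $\F_2$-span of $g$, $T^2(g)=(1,1,1,1,1,0)$ and $T^4(g)=(1,1,1,0,1,1)$. This is a $2$-quasi-cyclic code of dimension $3$, and every pair of block columns of its generator matrix has rank $2$, so $r=2$ by Lemma~\ref{LemmaBlockRank}. Yet $\varphi(\C)=R\,\varphi(g)$ with $\varphi(g)=(1+X+X^2,\;X+X^2)$, a pair of complementary idempotents with trivial annihilator, so $\varphi(\C)\cong R$ is generated by a \emph{single} element (both local dimensions equal $1$). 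Hence the assertion ``cannot be generated by less than $r$ elements'' is false as stated; only the inequality ``minimal number of generators $\le r$'' survives, which is what your first paragraph establishes. For what it is worth, the paper offers no proof of this corollary (it is stated as an immediate consequence of Algorithm~\ref{al:basis-block-rank} and Corollary~\ref{cor:gen}), and the natural argument one would extract from there --- project the generators and their shifts onto a block --- runs into exactly the obstruction you identified, so your suspicion about this step was well placed.
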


\begin{definition}[Generator polynomial]
  The polynomial $g(X) \in M_{\ell}(\F_q)[X]$ from
  Corollary~\ref{cor:gen} is called a \emph{generator polynomial}
  of $\C$.
\end{definition}

\begin{example}
  Let $I = \gen{P(X),Q(X)} \subset M_3(\F_4)[X]/(X^5-1)$ be a left ideal.
  The row echelon form generator matrix of the $3$-quasi cyclic code
  $\C_I$ associated to the left ideal $I$ is 
  \begin{equation*}
    G = \left( \begin{array}{ccc|ccc|ccc|ccc|ccc}
      1 & 0 & \omega^2 &
      0 & 0 & 0 &
      0 & \omega^2 & \omega &
      \omega & 0 & 1 &
      0 & 0 & 0 \\
 
      0 & 1 & \omega^2 &
      0 & 0 & 0 &
      0 & 0 & 0 &
      \omega & \omega & 0 &
      1 & 0 & \omega^2 \\
      
      \hline
      
      0 & 0 & 0 &
      1 & 0 & \omega^2 &
      0 & 0 & 0 &
      0 & \omega^2 & \omega &
      \omega & 0 & 1 \\
 
      0 & 0 & 0 &
      0 & 1 & \omega^2 &
      0 & \omega^2 & \omega &
      \omega & 0 & 1 &
      \omega & \omega & 0 \\
 
      \hline
 
      0 & 0 & 0 &
      0 & 0 & 0 &
      1 & 1 & 0 &
      \omega^2 & 0 & \omega &
      0 & \omega^2 & \omega 
    \end{array} \right).
  \end{equation*}
  Algorithm~\ref{al:basis-block-rank} gives that
  $(g_4,g_5,T^3(g_4),T^3(g_5),T^{2 \times 3}(g_5))$ is a basis of
  $\C_I$. Moreover
  \begin{equation*}
    g(X) =
    \begin{pmatrix}
    0 & 1 & \omega^2 \\
    0 & 0 & 0 \\
    0 & 0 & 0
    \end{pmatrix} +
    \begin{pmatrix}
    0 & \omega^2 & \omega \\
    1 & 1 & 0 \\
    0 & 0 & 0
    \end{pmatrix} X +
    \begin{pmatrix}
    \omega & 0 & 1 \\
    \omega & 0 & \omega \\
    0 & 0 & 0
    \end{pmatrix} X^2 +
    \begin{pmatrix}
    \omega & \omega & 0 \\
    0 & \omega^2 & \omega \\
    0 & 0 & 0
    \end{pmatrix} X^3
  \end{equation*}
  is a generator polynomial of $\C_I$ and
  $I=\gen{P(X),Q(X)} = \gen{g(X)}$.
\end{example}

\subsection{A property of generator polynomials}

The following proposition generalizes
\cite[Theorem~1,~(c), page~190]{SloMacWil86} and
\cite[Theorem~4, page~196]{SloMacWil86}.

\begin{proposition}
  Let $\C$ be an $\ell$-quasi-cyclic code of length $m\ell$ over $\F_q$.
  Let $P(X)$ be a generator polynomial of $\C$ and $Q(X)$ a
  generator polynomial of its dual.
  Then 
  \begin{equation*}
    P(X) \left( ^t Q^{\star}(X) \right) = 0 \pmod{X^m - 1}
  \end{equation*}
  where $Q^{\star}$ denotes the reciprocal polynomial of $Q$ and $^t Q$
  the polynomial whose coefficients are the transposed matrices of
  the coefficients of $Q$.
\end{proposition}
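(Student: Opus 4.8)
The plan is to reduce the statement to a concrete linear-algebra identity by unrolling the dictionary between quasi-cyclic codes and matrix polynomials. First I would recall how the generator polynomial $P(X)$ acts: by Theorem~\ref{thm:one-to-one} and Corollary~\ref{cor:gen}, the code $\C$ corresponds to the left ideal $\gen{P(X)} \subseteq M_\ell(\F_q)[X]/(X^m-1)$, and the rows of $\C$ (seen as $\F_q^{m\ell}$ vectors packaged into $(\F_q[X]/(X^m-1))^\ell$ via $\varphi$) are exactly the elements of that ideal, i.e.\ the $\F_q[X]/(X^m-1)$-span of the rows of $P(X)$. Likewise the rows of the dual code $\C^\perp$ are the $\F_q[X]/(X^m-1)$-span of the rows of $Q(X)$. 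So a codeword of $\C$ has "polynomial form" $a(X)P(X)$ for some $a(X) \in (\F_q[X]/(X^m-1))^{1\times \ell}$ or, more honestly, any row of $P(X)$ times a scalar in the ring, and similarly for $\C^\perp$ with $Q(X)$.

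Next I would translate the Euclidean inner product on $\F_q^{m\ell}$ into this polynomial language. This is the standard trick behind the cyclic-code fact \cite[Theorem~4, page~196]{SloMacWil86}: two vectors $u,v \in \F_q^{m}$ (scalar case) are orthogonal together with all cyclic shifts of $v$ iff $u(X)\,v^\star(X) \equiv 0 \pmod{X^m-1}$, where $v^\star$ is the reciprocal. In the block setting, a vector in $\F_q^{m\ell}$ corresponds to a length-$m$ vector with entries in $\F_q^\ell$ (a $1\times\ell$ matrix), and the inner product of such a vector $u$ with $T^{j\ell}$-shifts of another such vector $v$, for all $j$, is governed by the matrix-polynomial product $u(X)\,({}^t v^\star(X))$ modulo $X^m-1$: the coefficient of $X^j$ collects exactly the pairings of $u$ against the $j$-th cyclic block-shift of $v$. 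I would state and prove this as a short lemma (the "key equation" flavored identity), keeping careful track of the transpose, since now entries are matrices and $\langle a, b\rangle = a\,{}^t b$ for row matrices.

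With that lemma in hand the proof is quick: every row of $P(X)$ generates (under the ring action and block shifts) codewords of $\C$, and every row of $Q(X)$ likewise generates codewords of $\C^\perp$; since $\C \perp \C^\perp$, every row of $P$ is orthogonal to every block shift of every row of $Q$, so by the lemma each entry of the matrix product $P(X)\left({}^t Q^\star(X)\right)$ vanishes modulo $X^m-1$, giving $P(X)\left({}^t Q^\star(X)\right) \equiv 0 \pmod{X^m-1}$. I would also remark that this recovers \cite[Theorem~1,~(c), page~190]{SloMacWil86} and \cite[Theorem~4, page~196]{SloMacWil86} when $\ell = 1$, since then $P$ and $Q$ are (up to units) the scalar generator polynomials $g(X)$ and $h(X)$ of a cyclic code and its dual, and the identity becomes $g(X)h^\star(X) \equiv 0$.

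The main obstacle I anticipate is purely bookkeeping rather than conceptual: getting the reciprocal/transpose conventions exactly right so that the coefficient of $X^j$ in $P(X)\,{}^tQ^\star(X)$ really is (a matrix whose entries are) inner products of rows of $P$ against the $j$-th block shift of rows of $Q$, modulo the wrap-around coming from reduction by $X^m-1$. In particular one has to be careful that $P$ and $Q$ as given by Corollary~\ref{cor:gen} may carry the normalizing factor $X^{-\nu}$ and may have a zero last row, but multiplying by a unit $X^{\pm\nu}$ does not affect vanishing modulo $X^m-1$, and the zero rows contribute zero, so neither issue is a real problem. Once the index-chasing in the lemma is done correctly, the rest is immediate from $\C \perp \C^\perp$.
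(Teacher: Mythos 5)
Your proposal is correct and follows essentially the same route as the paper: identify the coefficient of $X^j$ in $P(X)\left({}^t Q^{\star}(X)\right)$ as the collection of inner products $\sum_i P_i\left({}^t Q_{i-j \bmod m}\right)$ of the rows of $P$ against the $j$-th block shifts of the rows of $Q$, and conclude from $\C \perp \C^{\perp}$ together with quasi-cyclicity that every such coefficient vanishes modulo $X^m - 1$. The paper simply carries out this index computation inline rather than isolating it as a lemma, and it likewise ignores the harmless $X^{-\nu}$ normalization you mention.
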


\begin{proof}
  Since $P(X) = \sum_{i = 0}^{m - 1} P_i X^i$ is a generator
  polynomial of $\C$, the rows of the matrix
  \begin{equation*}
    \begin{pmatrix} P_0 & P_1 & \ldots & P_{m - 1} \end{pmatrix}
  \end{equation*}
  and their shifts span $\C$.
  Similarly $Q(X) = \sum_{i = 0}^{m - 1} Q_i X^i$ and the rows of
  \begin{equation*}
    \begin{pmatrix} Q_0 & Q_1 & \ldots & Q_{m - 1} \end{pmatrix}
  \end{equation*}
  and their shifts span $\C^{\perp}$.
  By definition of a dual code, we have
  \begin{equation*}
    \begin{pmatrix} P_0 & P_1 & \cdots & P_{m-1} \end{pmatrix}
    \begin{pmatrix} ^t Q_0 \\ ^t Q_1 \\ \vdots \\ ^t Q_{m-1} \end{pmatrix}
    = \sum_{i=0}^{m-1} P_i \left( ^t Q_i \right) = 0.
  \end{equation*}
  As $\C$ and $\C^{\perp}$ are $\ell$-quasi cyclic codes we also have
  \begin{equation*}
    \sum_{i = 0}^{m - 1} P_i \left( ^t Q_{i + j \mod m} \right) = 0
  \end{equation*}
  for all $j \in \Z$. Therefore 
  \begin{equation*}
    P(X) \left( ^t Q^{\star}(X) \right) =
    \sum_{j = 0}^{m-1} \sum_{i = 0}^{m - 1}
    P_i \left( ^tQ_{i - j \mod m} \right) X^j
    = 0 \mod (X^m-1).
  \end{equation*}
  Hence the proposition.
\end{proof}

\section{Quasi-BCH}
\label{Sec:BCH}
In Section~\ref{Sec:Classification} we saw that
quasi-cyclic codes can be regarded as a generalization of
cyclic codes. Therefore, it is interesting to focus on the
generalization of BCH codes. We start with the definition and then
study their parameters. Finally we present a decoding scheme for
quasi-BCH codes raising interesting questions. We fix four positive
integers $n = m\ell$ and $s$.

\subsection{Definition}

\begin{definition}[Primitive root of unity]
\label{defi:racprim}
  Let $q$ be a prime power. A matrix
  $A \in M_\ell(\F_{q^s})$ is called a \emph{primitive $m$-th root of
  unity} if
  \begin{itemize}
    \item $A^m = I_\ell$,
    \item $A^i \neq I_\ell$ if $i < m$,
    \item $\det(A^i - A^j) \neq 0$, whenever $i \neq j$.
  \end{itemize}
\end{definition}

\begin{proposition}
\label{prop:racprim}
  Let $q$ be a prime power and suppose that $q^{s\ell} - 1 = m$.
  Then there exists a primitive $m$-th root of unity in
  $M_{\ell}(\F_{q^s})$.
\end{proposition}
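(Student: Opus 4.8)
The plan is to realize $\F_{q^{s\ell}}$ as a vector space over $\F_{q^s}$ and exhibit the desired matrix as the companion-type matrix of multiplication by a primitive element. Concretely, since $q^{s\ell} - 1 = m$, the multiplicative group $\F_{q^{s\ell}}^{\times}$ is cyclic of order $m$; let $\beta$ be a generator. Multiplication by $\beta$ is an $\F_{q^s}$-linear map $\mu_\beta : \F_{q^{s\ell}} \to \F_{q^{s\ell}}$, and after fixing an $\F_{q^s}$-basis $(1, \beta, \ldots, \beta^{\ell-1})$ of $\F_{q^{s\ell}}$ (which is indeed a basis since $\beta$ generates $\F_{q^{s\ell}}^\times$, hence $\F_{q^s}(\beta) = \F_{q^{s\ell}}$, and $[\F_{q^{s\ell}}:\F_{q^s}] = \ell$), let $A \in M_\ell(\F_{q^s})$ be the matrix of $\mu_\beta$ in that basis. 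I claim $A$ is a primitive $m$-th root of unity.

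The verification proceeds by checking the three conditions in Definition~\ref{defi:racprim}. First, $A^m$ is the matrix of $\mu_{\beta^m} = \mu_1 = \mathrm{id}$, so $A^m = I_\ell$. Second, for $0 < i < m$ we have $\beta^i \neq 1$ since $\beta$ has order exactly $m$; hence $\mu_{\beta^i} \neq \mathrm{id}$ and $A^i \neq I_\ell$. Third, for $i \neq j$ (and it suffices to consider $0 \le i, j < m$) the matrix $A^i - A^j$ represents the $\F_{q^s}$-linear map $\mu_{\beta^i - \beta^j}$, i.e.\ multiplication by the scalar $\beta^i - \beta^j \in \F_{q^{s\ell}}$. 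Since $i \neq j$ and $\beta$ has order $m$, we get $\beta^i \neq \beta^j$, so $\beta^i - \beta^j \neq 0$, and multiplication by a nonzero element of a field is invertible; therefore $\mu_{\beta^i - \beta^j}$ is invertible and $\det(A^i - A^j) \neq 0$. This establishes all three properties.

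The only mild subtlety — and the step I would be most careful about — is the claim that $(1, \beta, \ldots, \beta^{\ell-1})$ is actually an $\F_{q^s}$-basis, which underpins the whole construction; this follows because a generator $\beta$ of $\F_{q^{s\ell}}^\times$ satisfies $\F_{q^s}(\beta) = \F_{q^{s\ell}}$, and this extension has degree $\ell$, so the minimal polynomial of $\beta$ over $\F_{q^s}$ has degree $\ell$, making $1, \beta, \ldots, \beta^{\ell-1}$ linearly independent (and spanning). In fact, once one observes that the map $a \mapsto \mu_a$ embeds $\F_{q^{s\ell}}$ as an $\F_{q^s}$-subalgebra of $M_\ell(\F_{q^s})$, all three conditions become transparent reformulations of the single fact that $\beta$ generates a cyclic group of order exactly $m = q^{s\ell}-1$. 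An alternative phrasing avoids choosing a basis altogether: note $M_\ell(\F_{q^s}) \cong \mathrm{End}_{\F_{q^s}}(\F_{q^{s\ell}})$ and take $A$ to be the endomorphism $\mu_\beta$ directly; I would likely present the basis version since it matches the matrix-centric language used elsewhere in the paper.
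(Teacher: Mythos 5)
Your proof is correct, and the matrix you construct is in fact the same one the paper uses: the matrix of $\mu_\beta$ in the power basis $(1,\beta,\ldots,\beta^{\ell-1})$ is precisely the companion matrix of the minimal polynomial of $\beta$ over $\F_{q^s}$, which is the paper's choice (its $\alpha$ is your $\beta$, a generator of $\F_{q^{s\ell}}^{\times}$). What genuinely differs is the verification. The paper extends scalars to $\F_{q^{s\ell}}$, writes $A = P^{-1}UP$ with $U$ upper triangular, and reads the three conditions of Definition~\ref{defi:racprim} off the eigenvalues of $A$, which are the conjugates of $\alpha$ and hence primitive $m$-th roots of unity; in particular $\det(A^i-A^j)$ becomes a product of nonzero differences of eigenvalue powers, and the step $A^m=I_\ell$ implicitly uses that $A$ is diagonalizable (distinct eigenvalues) or that its minimal polynomial divides $X^m-1$. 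You instead use the $\F_{q^s}$-algebra embedding $a\mapsto\mu_a$ of the field $\F_{q^{s\ell}}$ into $M_\ell(\F_{q^s})$, so that $A^i-A^j$ represents multiplication by the nonzero scalar $\beta^i-\beta^j$ and all three conditions collapse to the single fact that $\beta$ has multiplicative order $m$. Your route is more elementary and self-contained (no extension of scalars, no triangularization, no separate argument for $A^m=I_\ell$), and it makes explicit that $\F_{q^s}[A]$ is a field isomorphic to $\F_{q^{s\ell}}$, anticipating Proposition~\ref{FqAField}; the paper's route, in exchange, exhibits the spectrum of $A$ explicitly, which is convenient when one wants concrete determinant formulas such as $\det(A^i-A^j)=\prod_k(\alpha_k^i-\alpha_k^j)$ in later computations.
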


\begin{proof}
  Let $\alpha \in \F_{q^{s\ell}}$ be a primitive $m$-th root of unity
  and $A \in M_{\ell}(\F_{q^s})$ be the companion matrix of the
  irreducible polynomial $f(X) \in \F_{q^s}[X]$ of $\alpha$ over
  $\F_{q^s}$. There exists $P \in \GL_{\ell}(\F_{q^{s\ell}})$ and an
  upper triangular matrix $U \in M_{\ell}(\F_{q^{s\ell}})$ whose
  diagonal coefficients are the eigenvalues of $A$ such that
  $A = P^{-1} U P$. The eigenvalues of $A$ are exactly the
  roots of $f$ and then are primitive $m$-th roots of unity. Therefore
  $A$ satisfies the three conditions of Definition~\ref{defi:racprim}.
\end{proof}

\begin{definition}[Block minimum distance]
  Let $\C$ be a linear code over $\F_q$ of length $m\ell$.
  We define the \emph{$\ell$-block minimum distance} of $\C$ to be
  the minimum distance of the folded code of $\C$.
\end{definition}

\begin{definition}[Left quasi-BCH codes]
\label{defi:QBCH}
  Let $A$ be a primitive $m$-th root of unity in
  $M_{\ell}(\F_{q^s})$ and $\delta \leq m$.
  We define the $\ell$-quasi-BCH code of length $m\ell$,
  with respect to $A$, with designed minimum distance $\delta$,
  over $\F_q$ by
  \begin{multline*}
    \qbch_q(m,\ell,\delta,A) \eqdef \\
    \left\lbrace
      (c_1,\ldots,c_m) \in (\F_q^\ell)^m :
        \sum_{j = 0}^{m - 1} A^{ij}c_j = 0
        \text{ for } i = 1,\ldots,\delta - 1
    \right\rbrace.
  \end{multline*}
  We call the linear map
  \begin{equation*}
    \begin{array}{rcl}
      \syn_A : (\F_q^{\ell})^m & \rightarrow & (\F_{q^s}^{\ell})^m \\
      x = (x_1,\ldots,x_m)     & \mapsto     & \sum_{j = 0}^{m - 1}
                                                     A^j x_j
    \end{array}
  \end{equation*}
  the \emph{syndrome} map with respect to $\qbch(m,\ell,\delta,A)$.
\end{definition}

\begin{proposition}
  Using the notation of Definition~\ref{defi:QBCH},
  $\qbch_q(m,\ell,\delta,A)$ has dimension at least
  $(m - e(\md - 1))\ell$ and $\ell$-block minimum distance at least
  $\delta$. In other words $\qbch_q(m,\ell,\delta,A)$ is an
  $[m\ell , \geq (m - s(\delta - 1))\ell , \geq \delta]_{\F_q}$-code.
\end{proposition}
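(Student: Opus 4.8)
The plan is to establish the two bounds separately: first the dimension bound via a rank count on the parity-check constraints, then the block-minimum-distance bound by reducing to a Vandermonde-type argument on the folded code.

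For the dimension, I would write the defining conditions $\sum_{j=0}^{m-1} A^{ij} c_j = 0$ for $i = 1, \ldots, \delta-1$ as a single parity-check system over $\F_{q^s}$. Each condition is an equation in $(\F_{q^s}^\ell)^m$, i.e. $\ell$ scalar equations over $\F_{q^s}$, so there are at most $(\delta-1)\ell$ equations over $\F_{q^s}$. Since the code is defined over $\F_q$ and $\F_{q^s}/\F_q$ has degree $s$, each $\F_{q^s}$-linear equation contributes at most $s$ independent $\F_q$-linear equations on $(\F_q^\ell)^m \cong \F_q^{m\ell}$; expanding coordinates in an $\F_q$-basis of $\F_{q^s}$ makes this precise. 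Hence the solution space has $\F_q$-dimension at least $m\ell - s(\delta-1)\ell = (m - s(\delta-1))\ell$, which is the claimed bound. I should double-check that the proposition statement's "$e$" is simply "$s$" (the excerpt writes $e(\delta-1)$ in the first sentence but $s(\delta-1)$ in the restatement); I will use $s$ throughout.

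For the block-minimum-distance bound, the plan is to pass to the folded code $\C' = \phi^{\times m}(\C) \subseteq \F_{q^\ell}^m$ and show it has minimum distance at least $\delta$, which by the Block minimum distance definition is exactly the $\ell$-block minimum distance of $\C$. Let $c = (c_1,\ldots,c_m) \in \qbch_q(m,\ell,\delta,A)$ be nonzero with block support $\supp = \{ j : c_j \neq 0 \}$ of size $w$; I want $w \geq \delta$. Suppose $w \leq \delta - 1$. Restricting the equations $\sum_{j} A^{ij} c_j = 0$ for $i = 1, \ldots, \delta-1$ to the $w$ nonzero blocks gives a system governed by the matrix $(A^{i j})_{i,j}$ with $i$ ranging over $\delta-1 \geq w$ values and $j$ over the $w$ indices in $\supp$; taking any $w$ of the rows yields a block-Vandermonde matrix built from the powers $A^{j}$, $j \in \supp$. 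The key point is that this block-Vandermonde matrix in $M_\ell(\F_{q^s})$ is invertible: its "determinant" factors (as in the scalar case) into a product of $\det(A^{j} - A^{j'})$ over pairs $j \neq j'$ in $\supp$, each nonzero by the third condition in Definition~\ref{defi:racprim}, together with factors $\det(A^{j})$ which are nonzero since $A$ is invertible ($A^m = I_\ell$). Invertibility then forces $c_j = 0$ for all $j \in \supp$, a contradiction, so $w \geq \delta$.

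The main obstacle will be making the block-Vandermonde determinant argument rigorous over the noncommutative ring $M_\ell(\F_{q^s})$: the usual Vandermonde determinant factorization uses commutativity, so I would instead diagonalize (or triangularize) $A$ as in the proof of Proposition~\ref{prop:racprim}, writing $A = P^{-1} U P$, and reduce the block-Vandermonde invertibility to a genuine scalar Vandermonde computation in the eigenvalues — which are distinct primitive $m$-th roots of unity, hence give a nonsingular scalar Vandermonde matrix of the appropriate size. Alternatively, one can argue directly: a nontrivial block-linear dependence among $A^{j}$, $j \in \supp$, with coefficients in $\F_{q^s}^\ell$ would, after passing to the extension where $A$ triangularizes, produce a nontrivial polynomial of degree $\leq w - 1 \leq \delta - 2$ vanishing at $w$ distinct eigenvalues, which is impossible. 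Once this invertibility is in hand, both bounds follow, and the final $[m\ell, \geq (m-s(\delta-1))\ell, \geq \delta]_{\F_q}$ assertion is just the combination of the two parts.
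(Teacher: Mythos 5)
Your dimension count is essentially the paper's own argument: each of the $(\delta-1)\ell$ parity-check equations over $\F_{q^s}$ imposes at most $s$ independent $\F_q$-linear conditions on $\F_q^{m\ell}$, so $\dim_{\F_q} \geq m\ell - s(\delta-1)\ell$ (and yes, the ``$e$'' in the statement is just the extension degree $s$). Your distance argument also follows the paper's route: a nonzero codeword of block weight $w \leq \delta-1$ would contradict the invertibility of a $w\times w$ block-Vandermonde matrix in powers of $A$. The difference is how that invertibility is justified. The paper exploits the fact that all blocks lie in the \emph{commutative} subring $\F_{q^s}[A]$ of $M_\ell(\F_{q^s})$: it computes the determinant over that subring by the usual Vandermonde trick, obtaining a product of matrices $A^{j_k}-A^{j_{k'}}$, each invertible by condition~(3) of Definition~\ref{defi:racprim}, hence a unit of $\F_{q^s}[A]$, hence invertibility over $\F_{q^s}$. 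This sidesteps the noncommutativity obstacle you worry about without any triangularization.

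Your proposed triangularization route contains a genuine flaw as stated. Definition~\ref{defi:racprim} does \emph{not} force the eigenvalues of $A$ to be distinct: $A=\zeta I_\ell$ with $\zeta$ a scalar primitive $m$-th root of unity satisfies all three conditions, so ``the eigenvalues are distinct primitive $m$-th roots of unity'' is unavailable (distinctness holds for the companion-matrix construction of Proposition~\ref{prop:racprim}, but the proposition concerns arbitrary $A$); likewise ``a nontrivial polynomial of degree $\leq w-1$ vanishing at $w$ distinct eigenvalues'' is not the statement you need. The correct reduction is: write $A=P^{-1}UP$ with $U$ upper triangular with diagonal $\lambda_1,\ldots,\lambda_\ell$; conjugating the block matrix by $\mathrm{diag}(P,\ldots,P)$ and permuting rows and columns so as to group the $r$-th coordinate of every block together yields a block-triangular matrix whose $\ell$ diagonal blocks are the scalar $w\times w$ matrices $\bigl((\lambda_r^{j_k})^i\bigr)_{i,k}$, one for each eigenvalue $\lambda_r$. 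Each such block is a nonsingular Vandermonde-type matrix because its nodes $\lambda_r^{j_1},\ldots,\lambda_r^{j_w}$ are nonzero (as $A^m=I_\ell$) and pairwise distinct: indeed $\det(A^{j_k}-A^{j_{k'}})=\prod_r(\lambda_r^{j_k}-\lambda_r^{j_{k'}})\neq 0$ by condition~(3), which says precisely that \emph{each individual} eigenvalue is a primitive $m$-th root of unity. With this repair (or by simply adopting the paper's commutative-subring determinant argument) your proof is complete; as written, the key invertibility step rests on a property of $A$ that is neither guaranteed nor the one actually needed.
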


\begin{proof}
  According to Definition~\ref{defi:QBCH} we have that
  \begin{equation*}
    H = \begin{pmatrix}
      I_{\ell} & A              & \cdots & A^{m - 1} \\
      I_{\ell} & A^2            & \cdots & A^{2(m - 1)} \\
      \vdots   & \vdots         &        & \vdots \\
      I_{\ell} & A^{\delta - 1} & \cdots & A^{(\delta - 1)(m - 1)}
    \end{pmatrix} \in M_{(\delta - 1)\ell,m\ell}(\F_{q^s})
  \end{equation*}
  is a parity check matrix of $\qbch_q(m,\ell,\delta,A)$.
  Let
  \begin{equation*}
    V = \begin{pmatrix}
      I_\ell & A              & \cdots & A^{\delta - 1} \\
      I_\ell & A^2            & \cdots & A^{2(m - 1)} \\
      \vdots & \vdots         &        & \vdots \\
      I_\ell & A^{\delta - 1} & \cdots & A^{(\delta - 1)^2}
    \end{pmatrix}.
  \end{equation*}
  Using the Vandermonde matrix trick we find that the determinant $D$
  of $V$ over $M_{\ell}(\F_{q^s})[A]$ is $\prod_{i < j} (A^i - A^j)$.
  By the definition of $A$ we have $\det_{\F_{q^s}} D \neq 0$, thus
  $V$ is invertible over $M_{\ell}(\F_{q^s})[A]$ and then, invertible
  over $\F_{q^s}$. Therefore $H$ has full rank over $\F_{q^s}$.
  
  Let $i:\F_q^{m\ell} \rightarrow \F_{q^s}^{m\ell}$ be the canonical
  injection and denote by
  $h:\F_{q^s}^{m\ell} \rightarrow \F_{q^s}^{(\delta - 1)\ell}$
  the $\F_q$-linear map given by $H$. Then we have
  $\dim_{\F_q}(\im h) = e(\delta - 1)\ell$. Thus
  $\dim_{\F_{q^s}}(\im h \circ i) \leq (\delta - 1)\ell$ and
  $\dim_{\F_q}(\im h \circ i) \leq e(\delta - 1)\ell$. Therefore
  $\dim_{\F_q}(\ker h \circ i) \geq m\ell - e(\delta - 1)\ell$.
  Suppose that there exists a codeword
  \mbox{$c = (c_1,\ldots,c_m) \in \C \setminus \{0\}$}
  with $\ell$-block weight $b \leq \delta - 1$. Note $i_1,\ldots,i_b$
  the indexes such that $c_{i_j}\neq 0$ for $i=1,\ldots,b$.
  This implies that the matrix
  \begin{equation*}
    \begin{pmatrix}
      A^{i_1}  & A^{i_2}  & \cdots & A^{i_b} \\
      A^{2i_1} & A^{2i_2} & \cdots & A^{2i_b} \\
      \vdots   & \vdots   &        & \vdots \\
      A^{(\delta - 1) i_1} &
        A^{(\delta - 1) i_2} &
          \cdots &
            A^{(\delta - 1) i_b} \\
    \end{pmatrix}
  \end{equation*}
  has not full rank which is absurd.   
\end{proof}

\begin{example}
  Consider the $3$-quasi-BCH codes defined by primitive roots in
  $M_3(\F_{2^2})$ of length $63$ over $\F_2$ with designed
  minimum distance $6$ defined by a $21$-th root of unity in
  $\F_{2^2}$. In other words, $q = 2,m = 21,\ell = 3,s = 2$ and
  $\delta = 6$. There are $22$ non-equivalent codes splitting as
  follows:
  \begin{equation*}
    \begin{array}{|c|c|}
      \hline
        \text{Number of codes} & \text{Parameters} \\
      \hline
        2  & [63,33,6]_{\F_2} \\
      \hline
        18 & [63,33,7]_{\F_2} \\
      \hline
        2  & [63,36,6]_{\F_2} \\
      \hline
    \end{array}
  \end{equation*}
  Notice that their dimension is always at least
  $(m - e(\delta - 1))\ell = 33$ and their minimum distance is
  at least $\delta = 6$. All the computations have been performed
  with the \textsc{magma} computer algebra system \cite{magma}.
\end{example}

\begin{example}
  Let $q = 5,m = 7,\ell = 3,s = 2$ and $\delta = 3$. Let
  $\omega \in \F_{5^2}$ be a primitive $(5^2 - 1)$-th root of unity and
  \begin{equation*}
    A=\begin{pmatrix}
      \omega^9    & \omega^4    & \omega^{22} \\
      \omega^{11} & \omega^{11} & \omega^{15} \\
      \omega^2    & \omega^{19} & 1 
    \end{pmatrix}
    \in M_3(\F_{5^2}).
  \end{equation*}
  Then the left $3$-quasi-BCH code of length $21$ with respect
  to $A$ with designed minimum distance $3$ over $\F_5$ has parameters
  $[21,9,7]_{\F_5}$. Its generator polynomial is given by
  \begin{multline*}
    g(X) = 
    \begin{pmatrix} 1 & 4 & 3 \\ 3 & 3 & 4 \\ 1 & 1 & 4 \end{pmatrix} X^4 +
    \begin{pmatrix} 4 & 0 & 0 \\ 4 & 0 & 0 \\ 4 & 0 & 4 \end{pmatrix} X^3 +
    \begin{pmatrix} 3 & 0 & 4 \\ 0 & 3 & 4 \\ 0 & 0 & 0 \end{pmatrix} X^2 +\\
    \begin{pmatrix} 2 & 3 & 2 \\ 4 & 4 & 4 \\ 3 & 1 & 1 \end{pmatrix} X +
    \begin{pmatrix} 1 & 0 & 0 \\ 0 & 1 & 0 \\ 0 & 0 & 1 \end{pmatrix}
  \in M_3(\F_5)[X].
  \end{multline*}
\end{example}

\section{Decoding scheme for quasi-BCH codes}
\label{Sec:Key}
For this section we fix five positive integers $n = m\ell$, $r$ and
$\delta$, a primitive $m$-th root of unity $A \in M_{\ell}(\F_{q^s})$
and $\C = \qbch(m,\ell,\delta,A)$.
If the folded of $\C$ is a BCH code $\C'$ over $\F_{q^{\ell}}$ (which
is not the case in general) then
we can apply the standard, unique and list, decoding algorithms.
See for example \cite[Paragraph~6, page~270]{SloMacWil86} and
\cite{AugBarCou2011}. If $\C'$ is not a code for which a decoding
algorithm is known, we propose in what follows a decoding scheme for
$\C$ based on the key equation that we establish for quasi-BCH codes.
Following the same techniques as for BCH codes, we first compute the
locator and evaluator polynomials by solving the key equation and
then compute the error vector and recover the original message.

\begin{notation}
\label{not:utils}
  Let $\kappa$ be any field and $x = (x_1,\ldots,x_n) \in \kappa^n$.
  We denote by $w(x)$ the Hamming weight of $x$ \emph{i.e.} the
  cardinal of
  \mbox{$W = \{ i : i \in \{ 1,\ldots,n \} \text{ s.t. } x_i \neq 0 \}$}.
  We denote by $\supp(x)$ the support of $x$ \emph{i.e.} the set $W$.
\end{notation}

\subsection{The key equation}

As in the scalar case, we exhibit a key equation for quasi-BCH codes.
In this subsection, all vectors are considered to be single-column
matrices. Consider $\F_q^{\ell}$ as a product ring of $\ell$ copies of
$\F_q$. We define a map
\begin{equation*}
  \begin{array}{rcl}
    \Psi : M_{\ell}(\F_{q^s})[[X]] \times \F_q^{\ell}[[X]] &
              \rightarrow & \F_{q^s}^{\ell}[[X]] \\
    (f,g) & \mapsto & \sum_{i,j} f_j g_i X^{i + j}
  \end{array}
\end{equation*}
where the $f_i g_j$ are matrix-vector products. In the sequel we will
denote $\Psi(f,g)$ simply by $f \diam g$. Note that we have
$(fh) \diam g = f \diam (h \diam g)$ for any $h \in M_{\ell}(\F_{q^s})$.

Let $c$ be a codeword of $\C$ sent over a channel,
$y \in (\F_q^{\ell})^m$ be the received word and let $e$ be
the error vector \emph{i.e.} $e = y - c$ such that
$\w(e) = w \leq \halfdis$. Let
$W = \supp(e) = \{ i_1,\ldots,i_w \}$.

\begin{definition}[Locator and evaluator polynomials]
  We define the \emph{locator polynomial} by
  \begin{equation*}
    \Lambda(X) \eqdef \prod_{i \in W} (1- A^i X)
    \in M_{\ell}(\F_{q^s})
  \end{equation*}
  and the \emph{evaluator polynomial} by
  \begin{equation*}
    L(X) \eqdef \sum_{i \in W}
    \left(
      \prod_{j \neq i}^{w}
      A^i (1 - A^j) X
    \right) \diamond y_i
    \in \F_{q^s}^{\ell}[X].
  \end{equation*}
\end{definition}

\begin{lemma}
  \label{lem:invertSerie}
  Let $B \in M_\ell(\F_q)$ be a nonzero matrix, then $1 - BX$ has a
  left- and right- inverse in $M_\ell(\F_q)[[X]]$, both equal to
  \begin{equation*}
    \sum_{j = 0}^{+\infty} B^j X^j.
  \end{equation*}
\end{lemma}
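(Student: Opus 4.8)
The statement is the matrix analogue of the classical geometric series identity, so the plan is to verify directly that $\sum_{j=0}^{+\infty} B^j X^j$ is a two-sided inverse of $1 - BX$ in the power series ring $M_\ell(\F_q)[[X]]$. First I would note that the ring $M_\ell(\F_q)[[X]]$ is well-defined (formal power series with matrix coefficients, multiplication given by the usual Cauchy product), and that the proposed series $S(X) := \sum_{j\ge 0} B^j X^j$ is a legitimate element of it, since for each power $X^j$ the coefficient $B^j$ is a single well-defined matrix and no infinite sums of matrices are required. The nonzero hypothesis on $B$ is not actually needed for the computation; it is presumably stated only because the lemma is applied to nonzero matrices later (the case $B=0$ being trivial anyway).

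The core step is the telescoping computation. Multiplying on the left, $(1 - BX)\,S(X) = \sum_{j\ge 0} B^j X^j - \sum_{j\ge 0} B^{j+1} X^{j+1} = \sum_{j\ge 0} B^j X^j - \sum_{j\ge 1} B^{j} X^{j} = B^0 X^0 = I_\ell$. The same cancellation works on the right: $S(X)\,(1 - BX) = \sum_{j\ge 0} B^j X^j - \sum_{j\ge 0} B^j B X^{j+1}$, and since $B^j B = B^{j+1} = B\,B^j$ (powers of a single matrix commute), this again telescopes to $I_\ell$. Hence $S(X)$ is simultaneously a left and a right inverse of $1 - BX$, and in particular $1 - BX$ is a unit in $M_\ell(\F_q)[[X]]$ with inverse exactly $S(X)$; uniqueness of inverses in a ring then makes the ``both equal to'' part automatic.

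There is essentially no obstacle here: the only point requiring a word of care is that $M_\ell(\F_q)$ is noncommutative, so one must check the cancellation on both sides separately rather than invoking commutativity — but this is handled by the observation that $B^j$ and $B$ commute with each other, which is all the telescoping argument uses. I would present the two one-line computations and conclude. (If desired, one can alternatively phrase this via the coefficientwise identity: the coefficient of $X^n$ in $(1-BX)S(X)$ is $B^n - B\cdot B^{n-1} = 0$ for $n \ge 1$ and $I_\ell$ for $n = 0$, and symmetrically for $S(X)(1-BX)$; this makes the formal power series manipulation rigorous without appeal to convergence.)
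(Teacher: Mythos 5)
Your argument is correct: the telescoping computation $(1-BX)\sum_{j\ge 0}B^jX^j = I_\ell = \bigl(\sum_{j\ge 0}B^jX^j\bigr)(1-BX)$, justified coefficientwise and using only that $B$ commutes with its own powers, is exactly the standard proof, and the paper itself states this lemma without proof, treating it as immediate. Your side remarks are also accurate: the hypothesis that $B$ is nonzero plays no role in the verification, and uniqueness of two-sided inverses in a ring gives the ``both equal to'' clause for free.
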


We see that the locator polynomial $\Lambda(X)$ is invertible in
the power series ring $M_{\ell}(\F_{q^s})[[X]]$ and we have

\begin{align*}
  \left( \Lambda(X)^{-1} \right) \diamond L(X) &=
      \sum_{i \in W} \left(
        A^i (1 - A^i X)^{-1}
      \right) \diam y_i \\
  &= \sum_{i \in W} \left(
       \sum_{j = 0}^{+\infty} A^{i(j + 1)} X^j
     \right) \diam y_i \\
  &= \sum_{j = 0}^{+\infty}
       \sum_{i \in W} A^{i(j + 1)} y_i X^j.
\end{align*}

Using the fact that $y = c + e$ and that, by definition,
$\syn_{A^i}(y) = \syn_{A^i}(e)$ for any $i = 0,\ldots,\delta - 1$
we have

\begin{equation*}
  \left( \Lambda(X)^{-1} \right) \diamond L(X) =
     \sum_{j = 0}^{+\infty} \syn_{A^{j + 1}}(e) X^j
       \eqdef S_{\infty}(X).
\end{equation*}

\begin{proposition}
\label{prop:KE}
  For any error vector $e \in \F_q^{m\ell}$ such that
  $w(e) \leq \halfdis$ we have
  \begin{center}
    \fbox{$\Lambda(X) \diam S_{\infty}(X) = L(X)$}
  \end{center}
  and therefore
  \begin{equation}
  \label{equ:KE}
    \Lambda(X) \diam S_{\infty}(X) \equiv L(X) \mod X^{\delta}.
  \end{equation}
  We will refer to~\eqref{equ:KE} as the \emph{key equation}.
\end{proposition}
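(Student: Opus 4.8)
The plan is to establish the boxed identity $\Lambda(X) \diam S_\infty(X) = L(X)$ directly from the computation already carried out in the text, and then obtain the congruence modulo $X^\delta$ as an immediate consequence of truncating a power-series equality. The congruence part is genuinely routine once the exact identity is in hand, so the real content is the exact identity.

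First I would recall what the preceding paragraphs have proved: with $\Lambda(X) = \prod_{i \in W}(1 - A^i X)$ invertible in $M_\ell(\F_{q^s})[[X]]$ by Lemma~\ref{lem:invertSerie}, one has
\begin{equation*}
  \left( \Lambda(X)^{-1} \right) \diam L(X) = S_\infty(X).
\end{equation*}
This was derived by expanding $\Lambda(X)^{-1} \diam L(X)$ using the partial-fraction-style definition of $L(X)$, recognizing each term $A^i(1 - A^i X)^{-1}$ as the geometric series $\sum_{j \ge 0} A^{i(j+1)} X^j$, and then using $\syn_{A^i}(y) = \syn_{A^i}(e)$. So the proof of the proposition is simply to multiply this identity on the left by $\Lambda(X)$ and invoke the associativity-type relation $(fh)\diam g = f \diam(h \diam g)$ noted just after the definition of $\diam$. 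Concretely, $\Lambda(X) \diam S_\infty(X) = \Lambda(X) \diam \left( \Lambda(X)^{-1} \diam L(X) \right) = \left( \Lambda(X)\, \Lambda(X)^{-1} \right) \diam L(X) = I_\ell \diam L(X) = L(X)$, where the middle step uses that $\diam$ is compatible with multiplication in $M_\ell(\F_{q^s})[[X]]$ and the last uses that the identity matrix acts trivially.

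The one subtlety to be careful about — and the step I would expect to be the main obstacle — is that the property $(fh) \diam g = f \diam (h \diam g)$ is stated in the excerpt only for $h \in M_\ell(\F_{q^s})$, i.e.\ a constant matrix, whereas here $h = \Lambda(X)^{-1}$ is a genuine power series in $X$. I would address this by noting that $\diam$ is $X$-linear and continuous for the $X$-adic topology (it is defined coefficientwise by $f \diam g = \sum_{i,j} f_j g_i X^{i+j}$, each coefficient of the output being a finite sum), so the constant-matrix identity extends by linearity and limits to all of $M_\ell(\F_{q^s})[[X]]$ in the first argument; alternatively one checks the identity coefficient-by-coefficient, where it reduces to the associativity of matrix-vector multiplication $(AB)v = A(Bv)$ together with rearranging a double sum. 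Either way this is a formal verification in a power-series ring; there is no convergence or finiteness issue because the hypothesis $w(e) \le \halfdis \le (\delta-1)/2$ is not even needed for the exact identity (it is what guarantees $\deg L < \deg \Lambda \le \delta - 1$, hence the usefulness of the congruence form for decoding, but the algebraic identity holds unconditionally).

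Finally, to get \eqref{equ:KE}, I would observe that $L(X)$ is a polynomial of degree at most $w - 1 \le \halfdis - 1 < \delta$, and reducing the power-series identity $\Lambda(X) \diam S_\infty(X) = L(X)$ modulo $X^\delta$ gives $\Lambda(X) \diam S_\infty(X) \equiv L(X) \pmod{X^\delta}$; since only the coefficients of $X^0, \ldots, X^{\delta - 1}$ of $S_\infty(X)$ enter this truncation, and those are precisely $\syn_{A^1}(e), \ldots, \syn_{A^{\delta-1}}(e)$, which are computable from the received word $y$ via the syndrome map, the congruence is exactly the decoder-usable form of the key equation. This completes the proof.
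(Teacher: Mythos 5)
Your proposal is correct and follows essentially the same route as the paper: the displayed computation preceding the proposition establishes $\Lambda(X)^{-1} \diam L(X) = S_{\infty}(X)$, and the proposition is obtained exactly as you do, by multiplying on the left by $\Lambda(X)$ and truncating modulo $X^{\delta}$. Your explicit coefficientwise justification that $(fh) \diam g = f \diam (h \diam g)$ extends from constant $h$ to power series $h$ is a careful filling-in of a step the paper leaves implicit, not a different argument.
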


\subsubsection{Problems solving the key equation}
\label{sss:solving_ke}

In the case of BCH codes, the extended Euclidean and Berlekamp-Massey
algorithms can be used to solve the key equation.
We denote by $S_{\delta}(X)$ the polynomial
$S_{\infty}(X) \mod X^{\delta}$ from~\eqref{equ:KE} which can
be written as

\begin{equation}
\label{equ:lambda_L}
  \begin{pmatrix}
    \Lambda_0 & \dots & \Lambda_{\delta - 1} &
        \vline & L_0 & \dots & L_{\delta - 1}
  \end{pmatrix}
  \begin{pmatrix}
    S_0 & S_1    & \dots  & S_{\delta - 1} \\
        & S_0    &        & \vdots \\
        &        & \ddots & \vdots \\
        &        &        & S_0 \\
    \hline
    -1     & 0      & \dots  & 0 \\
     0     & -1     &        & \vdots \\
    \vdots &        & \ddots & 0 \\
     0     & \dots  & 0      & -1
  \end{pmatrix}
  = 0.
\end{equation}

Where the $S_i$'s and $L_i$'s are column vectors such that
the $S_i$'s are the coefficients of $S_{\delta}$ in
$\F_{q^s}^{\ell}$ and the $L_i$'s are
the coefficients in $\F_{q^s}^{\ell}$ of $L(X)$.
The $\Lambda_i$'s are the coefficients of $\Lambda(X)$ in
$M_{\ell}(\F_{q^s})$.
This system of linear equations over $\F_{q^s}$ has many solutions in
$\F_{q^s}$ since there are $\ell\delta + \delta$ unknowns and only
$\delta$ equations for each row of
\begin{equation*}
  \begin{pmatrix}
    \Lambda_0 & \dots & \Lambda_{\delta - 1} &
        \vline & L_0 & \dots & L_{\delta - 1}
  \end{pmatrix}.
\end{equation*}
However, we are only interested in the solution such that
$(\Lambda_0,\ldots,\Lambda_{\delta - 1})$ is an error locator
polynomial. In other words, if we let $\mathfrak{B}$ be the
solutions of~\eqref{equ:lambda_L} and
\begin{equation*}
  \mathfrak{S} = \left\{
    \prod_{i \in W} (1- A^i X) \in M_{\ell}(\F_{q^s}) :
      W \subset \{ 1,\ldots,m \} \text{ and }
      \#W \leq \halfdis
  \right\}
\end{equation*}
be the set of all possible locator polynomials corresponding
to errors of weight at most $\halfdis$,
we are interested in the elements of $\mathfrak{B} \cap \mathfrak{S}$.

\begin{proposition}
  There exists one and only one solution of
  equation~\eqref{equ:lambda_L} in $\mathfrak{S}$.
\end{proposition}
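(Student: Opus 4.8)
The plan is to prove the proposition by establishing existence and uniqueness separately, both relying on the key equation from Proposition~\ref{prop:KE} and the structure of $\mathfrak{S}$.

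\textbf{Existence.} First I would observe that, for \emph{some} actual error vector $e$ of weight $w \le \halfdis$ consistent with the received word (namely $e = y - c$ for the transmitted codeword $c$), the locator polynomial $\Lambda(X) = \prod_{i \in W}(1 - A^i X)$ together with its evaluator $L(X)$ satisfies the boxed identity $\Lambda(X) \diam S_{\infty}(X) = L(X)$, hence also $\Lambda(X) \diam S_{\infty}(X) \equiv L(X) \bmod X^{\delta}$, which is exactly~\eqref{equ:lambda_L}. Since $W \subset \{1,\ldots,m\}$ with $\#W = w \le \halfdis$, this $\Lambda(X)$ lies in $\mathfrak{S}$, so $\mathfrak{B} \cap \mathfrak{S} \ne \emptyset$. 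One subtlety to address: the proposition is phrased for \emph{any} error vector $e$ with $w(e) \le \halfdis$, so I would be careful to note that $S_{\infty}(X)$ (equivalently $S_{\delta}(X)$) depends only on the syndrome of $y$, which determines the error coset; the locator I exhibit is built from that specific $e$.

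\textbf{Uniqueness.} This is where the main work lies. Suppose $\Lambda^{(1)}(X) = \prod_{i \in W_1}(1 - A^i X)$ and $\Lambda^{(2)}(X) = \prod_{i \in W_2}(1 - A^i X)$ are two elements of $\mathfrak{S}$ both satisfying~\eqref{equ:lambda_L}, with associated evaluators $L^{(1)}, L^{(2)}$ of degree $< \delta$. Then $\Lambda^{(t)}(X) \diam S_{\infty}(X) \equiv L^{(t)}(X) \bmod X^{\delta}$ for $t = 1, 2$. Multiplying (on the appropriate side, using the associativity $(fh)\diam g = f \diam (h \diam g)$ noted after the definition of $\Psi$) I would form $\Lambda^{(2)}(X) \diam L^{(1)}(X)$ and $\Lambda^{(1)}(X) \diam L^{(2)}(X)$ and show they agree modulo $X^{\delta}$. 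The key bookkeeping point: $\deg \Lambda^{(t)} = \#W_t \le \halfdis$ and $\deg L^{(t)} \le \halfdis - 1$, so both $\Lambda^{(2)} \diam L^{(1)}$ and $\Lambda^{(1)} \diam L^{(2)}$ have degree at most $2\halfdis - 1 \le \delta - 1 < \delta$; hence the congruence modulo $X^{\delta}$ is in fact an equality of polynomials. Then, inverting $\Lambda^{(1)}(X)$ and $\Lambda^{(2)}(X)$ in $M_{\ell}(\F_{q^s})[[X]]$ via Lemma~\ref{lem:invertSerie} (each factor $1 - A^i X$ is invertible since $A^i$ is nonzero), I conclude that $\bigl(\Lambda^{(2)}\bigr)^{-1} \diam L^{(2)} = \bigl(\Lambda^{(1)}\bigr)^{-1} \diam L^{(1)}$ as power series, both equal to $S_{\infty}(X)$. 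From the explicit partial-fraction shape of $L(X)$ I would then read off that the power series $S_{\infty}(X)$ has "poles" exactly at $X = A^{-i}$ for $i \in W_t$ with nonzero residues $A^i y_i$ (here the condition $\det(A^i - A^j) \ne 0$ for $i \ne j$ guarantees these contributions do not collapse), forcing $W_1 = W_2$ and hence $\Lambda^{(1)} = \Lambda^{(2)}$.

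\textbf{Main obstacle.} The hard part will be making the "read off the poles/residues" step rigorous in the noncommutative, vector-valued $\diam$-setting: $S_{\infty}(X) \in \F_{q^s}^{\ell}[[X]]$ is a vector of scalar power series, and I need to argue that from the identity $\Lambda(X) \diam S_{\infty}(X) = L(X)$ the set $W$ and the data $\{A^i y_i\}_{i \in W}$ are uniquely recoverable. I expect to do this by clearing denominators symmetrically: if $W_1 \ne W_2$, pick $i_0 \in W_1 \setminus W_2$, multiply the equality $\bigl(\Lambda^{(2)}\bigr)^{-1}\diam L^{(2)} = \bigl(\Lambda^{(1)}\bigr)^{-1}\diam L^{(1)}$ through by $\Lambda^{(1)}(X)\Lambda^{(2)}(X)$, obtaining $\Lambda^{(2)}(X)\diam L^{(1)}(X) = \Lambda^{(1)}(X)\diam L^{(2)}(X)$ (polynomials), and then substitute/evaluate at $X = A^{-i_0}$ — where $\Lambda^{(1)}$ vanishes but $\Lambda^{(2)}$ does not — to derive that the residue term $\bigl(\prod_{j \in W_2}(1 - A^j A^{-i_0})\bigr)\bigl(\prod_{j \in W_1 \setminus \{i_0\}} A^{i_0}(1 - A^j A^{-i_0})\bigr)\diam\bigl(A^{i_0} y_{i_0}\bigr)$ must vanish; since $y_{i_0} = e_{i_0} \ne 0$ and all the scalar Vandermonde-type factors $(1 - A^j A^{-i_0})$ are invertible by the primitivity of $A$, this is a contradiction. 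Handling "substitution of a matrix argument into a $\diam$-product" cleanly — rather than literal evaluation, I would instead extract the coefficient identities degree by degree, or pass to the splitting field where $A$ is triangularizable (as in the proof of Proposition~\ref{prop:racprim}) and argue componentwise — is the delicate point, and I would budget most of the write-up for it.
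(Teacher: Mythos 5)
Your existence argument is exactly the paper's: Proposition~\ref{prop:KE} exhibits the genuine locator/evaluator pair of the actual error as a solution of~\eqref{equ:lambda_L}, and that locator lies in $\mathfrak{S}$. For uniqueness, however, you depart from the paper, and your route has a genuine gap. The paper disposes of uniqueness in one sentence, at the level of codewords: two solutions in $\mathfrak{S}$ would yield more than one codeword within distance $\halfdis$ of the received word, which is impossible since the designed distance is $\delta \geq 2\halfdis + 1$. Your algebraic route instead tries to force $W_1 = W_2$ from the congruence, and this cannot work as set up. First, membership in $\mathfrak{B}$ constrains the evaluator part only to degree at most $\delta - 1$: for \emph{any} $\Lambda^{(2)} \in \mathfrak{S}$, taking $L^{(2)}$ to be the truncation of $\Lambda^{(2)} \diam S_{\delta}$ modulo $X^{\delta}$ produces a solution of~\eqref{equ:lambda_L}. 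So your ``key bookkeeping point'' $\deg L^{(2)} \leq \halfdis - 1$ is unjustified for a hypothetical second solution, and with it the upgrade of $\Lambda^{(2)} \diam L^{(1)} \equiv \Lambda^{(1)} \diam L^{(2)} \pmod{X^{\delta}}$ to an equality of polynomials collapses.

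Second, and more fundamentally, even granting that degree bound the residue step is one-sided. Taking $(\Lambda^{(1)}, L^{(1)})$ to be the genuine pair of the actual error $e$, evaluation at $A^{-i_0}$ with $i_0 \in W_1 \setminus W_2$ does give a contradiction, since $L^{(1)}(A^{-i_0}) = \prod_{j \in W_1 \setminus \{i_0\}} (A^{i_0} - A^j)\, e_{i_0} \neq 0$ as in the proof of Proposition~\ref{prop:errorEvaluation}. But for $i_0 \in W_2 \setminus W_1$ you only learn $L^{(2)}(A^{-i_0}) = 0$, and no contradiction is available, because nothing forces $L^{(2)}$ to be a genuine evaluator with nonzero values on its claimed support. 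Indeed, if $w(e) < \halfdis$ and $j_0 \notin W_1$, then $\Lambda^{(1)}(X)(1 - A^{j_0}X) \in \mathfrak{S}$ together with $(1 - A^{j_0}X) \diam L^{(1)}$ (of degree $\leq w(e) \leq \halfdis - 1$) satisfies the exact key equation, hence~\eqref{equ:lambda_L}; so locators with strictly larger support do occur among the solutions, and the conclusion $W_1 = W_2$ you aim for is unreachable without an additional minimality or coprimality hypothesis that neither the statement nor your set-up provides. Any argument must therefore, as the paper's does, pass from a solution to the error vector and codeword it determines (the padded locators above all determine the same codeword, the extra positions receiving error value $0$) and invoke unique decodability within radius $\halfdis$. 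Finally, the point you budget most of the write-up for is actually harmless: all coefficient matrices in sight lie in the commutative algebra $\F_{q^s}[A]$, so evaluation of $\diam$-products at $X = A^{-i_0}$ is multiplicative, i.e.\ $(\Lambda \diam L)(A^{-i_0}) = \Lambda(A^{-i_0})\, L(A^{-i_0})$; the real difficulty lies in the step above, not there.
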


\begin{proof}
  Equation~\ref{equ:KE} ensures that there exists at least one element
  in $\mathfrak{B} \cap \mathfrak{S}$. If there were more than one
  solution in $\mathfrak{S}$ there would exist more than one codeword
  in a Hamming ball of radius $\halfdis$ which
  is absurd.
\end{proof}

The solving of~\eqref{equ:lambda_L} remains difficult. One
needs an exponential (in $\ell\delta$) number of arithmetic operations
in $\F_{q^s}$ to find the element of $\mathfrak{B} \cap \mathfrak{S}$.
For small values of $q$, $\ell$ and $\delta$ the solution can be found
by exhaustive search on the solutions of~\eqref{equ:lambda_L}.

\subsubsection{Unambiguous decoding scheme}

In this subsection, we prove that, as in the BCH case, the roots of
the locator polynomial (in $\F_{q^s}[A]$) give precious
information about the location of errors. The factorization of
polynomials of $M_\ell(\F_{q^s})[X]$ is not unique, all the
roots of the locator polynomial do not indicate an error position.

\begin{proposition}
\label{prop:equivRacineLoca}
  Let $e \in \F_q^{m\ell}$ be an error vector such that
  $w(e) \leq \halfdis$ and $\Lambda(X)$ be
  the locator polynomial associated to $e$. We have 
  \begin{equation*}
    e_i \neq 0 \Longleftrightarrow \Lambda(A^{-i}) = 0.
  \end{equation*}
\end{proposition}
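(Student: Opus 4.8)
The plan is to prove the biconditional in Proposition~\ref{prop:equivRacineLoca} by analyzing how the factors $(1 - A^i X)$ of $\Lambda(X) = \prod_{i \in W}(1 - A^i X)$ behave under evaluation at $A^{-i}$, using crucially the primitivity conditions on $A$ from Definition~\ref{defi:racprim}. First I would establish the easy direction: if $e_i \neq 0$, then $i \in W$, so $\Lambda(X)$ has $(1 - A^i X)$ as a (left or right) factor in $M_\ell(\F_{q^s})[X]$. Substituting $X = A^{-i}$ makes this factor vanish since $1 - A^i A^{-i} = 1 - I_\ell = 0$. The only subtlety is that evaluation of products of matrix polynomials at a specific matrix point is not a ring homomorphism in general; however, because all the $A^j$ for $j \in W$ are powers of the single matrix $A$, every coefficient matrix appearing in $\Lambda(X)$ is a polynomial in $A$ and hence commutes with $A^{-i}$, so the usual factor theorem argument goes through (one can write $\Lambda(X) = (1 - A^i X) R(X)$ with $R(X)$ having coefficients in $\F_{q^s}[A]$, and evaluation at $A^{-i}$ is multiplicative on the commutative ring $\F_{q^s}[A][X]$). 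This gives $\Lambda(A^{-i}) = 0$.

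For the converse, suppose $\Lambda(A^{-i}) = 0$ but $i \notin W$; I would derive a contradiction. Since everything lives in the commutative ring $\F_{q^s}[A]$, write $\Lambda(A^{-i}) = \prod_{j \in W}(1 - A^{j} A^{-i}) = \prod_{j \in W}(1 - A^{j - i})$, a product of elements of $\F_{q^s}[A]$. I would pass to the field $\F_{q^{s\ell}}$ by diagonalizing (as in the proof of Proposition~\ref{prop:racprim}, $A$ is conjugate over $\F_{q^{s\ell}}$ to a matrix whose eigenvalues are the roots of the minimal polynomial of a primitive $m$-th root of unity $\alpha$): the determinant condition $\det(A^a - A^b) \neq 0$ for $a \neq b \bmod m$ shows that $A^{j-i} - I_\ell = A^{j-i} - A^0$ is invertible whenever $j \not\equiv i \pmod m$. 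Since $1 \le j \le m$ and $1 \le i \le m$ with $i \neq j$, we indeed have $j \not\equiv i \pmod m$ for all $j \in W$, so each factor $1 - A^{j-i}$ is invertible, hence the product $\Lambda(A^{-i})$ is invertible and in particular nonzero — contradicting $\Lambda(A^{-i}) = 0$. Therefore $i \in W$, i.e.\ $e_i \neq 0$.

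The main obstacle I anticipate is being careful about the non-commutativity of $M_\ell(\F_{q^s})[X]$: the factor theorem and "evaluation is a homomorphism" both fail over a general matrix polynomial ring, so the argument must explicitly restrict to the commutative subring $\F_{q^s}[A]$ generated by $A$ (equivalently, observe $\F_{q^s}[A] \cong \F_{q^s}[X]/(f)$ with $f$ the minimal polynomial of $\alpha$, a field since $f$ is irreducible). Once that reduction is in place, the problem becomes the classical BCH statement over the field $\F_{q^s}[A] = \F_{q^{s\ell}}$ — the roots of $\prod_{j \in W}(1 - \alpha^j X)$ are exactly the $\alpha^{-j}$ for $j \in W$, and these are distinct because $\alpha$ has order $m$ and $W \subseteq \{1,\ldots,m\}$. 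A minor point to handle cleanly is the indexing convention $W \subseteq \{1,\ldots,m\}$ versus residues mod $m$, ensuring no two distinct indices in $W$ collapse mod $m$; this is immediate from $\#\{1,\ldots,m\} = m$. I would present the argument in the order: reduction to $\F_{q^s}[A]$, forward direction via factor theorem, converse via invertibility of $1 - A^{j-i}$.
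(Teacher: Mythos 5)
Your proof is correct and follows essentially the same route as the paper: the forward direction from the vanishing of the factor $1 - A^iX$ at $X = A^{-i}$, and the converse from the fact that $1 - A^{j-i}$ is a unit in $\F_{q^s}[A]$ whenever $j \neq i$, by the condition $\det(A^a - A^b) \neq 0$ in Definition~\ref{defi:racprim}. Your explicit remark that the evaluation argument is legitimate because all coefficients lie in the commutative subring $\F_{q^s}[A]$ is a useful clarification of a point the paper leaves implicit, but it does not change the approach.
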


\begin{proof}
  By definition, we have $\Lambda(A^{-i}) = 0$ if $e_i \neq 0$.
  Conversely, if $e_i = 0$ then $A^j A^{-i} \neq I_{\ell}$ for
  $j \in \supp(e)$. Thus $1 - A^j A^{-i}$ is a unit in
  $\F_{q^s}[A]$ by definition of $A$. Therefore
  $\Lambda(A^{-i}) \neq 0$.
\end{proof}
  
These roots can be found by an exhaustive search on the powers of $A$
in at most $m$ attempts. At this step the support of the error vector
$e$ is known. The last step to complete the decoding is to find the
value of the error.

\begin{proposition}
\label{prop:errorEvaluation}
  Let $e \in \F_q^{m\ell}$ be an error such that
  $w(e) \leq \halfdis$, \mbox{$W=\supp(e)$}, $\Lambda(X)$ be the
  locator and $L(X)$ be the evaluator polynomials associated to
  $e$. If $A^{-i}$ is a root of $\Lambda(X)$ for $ i \in W$,
  then 
  \begin{equation*}
    e_i =  \prod_{j \in W \setminus \{i\}}
                (A^{i} - A^{j})^{-1} L(A^{-i})
  \end{equation*}
  where $L(A^j)$ denotes $\sum (A^j)^i L_i$.
\end{proposition}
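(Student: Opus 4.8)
The plan is to derive the formula for $e_i$ directly from the definition of the evaluator polynomial $L(X)$ together with the characterization of the error positions given in Proposition~\ref{prop:equivRacineLoca}. Recall that
\begin{equation*}
  L(X) = \sum_{k \in W}
  \left( \prod_{j \neq k}^{w} A^k (1 - A^j X) \right) \diamond y_k,
\end{equation*}
and that, since $e = y - c$ and $c$ is a codeword (so its syndromes up to order $\delta - 1$ vanish), the $\diamond$-products appearing in the key equation only see the error part; in particular one may replace $y_k$ by $e_k$ in the relevant evaluations. The first step is therefore to substitute $X = A^{-i}$ into $L(X)$ for a fixed $i \in W$ and observe that in the sum over $k \in W$, every term with $k \neq i$ contains the factor $1 - A^i A^{-i} = 1 - I_\ell = 0$ (this is the factor indexed by $j = i$ inside the product $\prod_{j \neq k}$, which is present precisely because $k \neq i$). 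Hence all terms but $k = i$ drop out.

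Next I would evaluate the surviving term $k = i$: it is $\left( \prod_{j \in W \setminus \{i\}} A^i (1 - A^j A^{-i}) \right) \diamond e_i$. Here one uses $A^i(1 - A^j A^{-i}) = A^i - A^j$, so the coefficient becomes the matrix product $\prod_{j \in W \setminus \{i\}} (A^i - A^j)$. By the defining property of the primitive $m$-th root of unity $A$ (namely $\det(A^i - A^j) \neq 0$ whenever $i \neq j$), each factor $A^i - A^j$ with $j \in W \setminus \{i\}$ is invertible over $\F_{q^s}$, so the whole product is an invertible matrix. One then has $L(A^{-i}) = \left( \prod_{j \in W \setminus \{i\}} (A^i - A^j) \right) e_i$, and multiplying on the left by the inverse yields the claimed identity $e_i = \prod_{j \in W \setminus \{i\}} (A^i - A^j)^{-1} L(A^{-i})$.

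The one point that needs a little care — and which I expect to be the main subtlety rather than a genuine obstacle — is the bookkeeping of the $\diamond$ operation versus ordinary matrix-vector multiplication: $\diamond$ was defined on power series with a coefficient-convolution, so I must check that evaluating a product $f(X) \diamond g(X)$ at a scalar power $X = A^{-i}$ agrees with first evaluating and then applying the matrix, using the associativity relation $(fh) \diamond g = f \diamond (h \diamond g)$ for $h \in M_\ell(\F_{q^s})$ recorded right after the definition of $\Psi$. Once this compatibility is noted, the evaluations above are literally substitutions of a commuting scalar (a power of $A$) into polynomials with coefficients in $\F_{q^s}[A]$, so everything commutes as expected and the computation goes through verbatim. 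The hypothesis $w(e) \leq \halfdis$ is used only to guarantee (via Proposition~\ref{prop:equivRacineLoca}) that $W$ is exactly the set of $i$ with $\Lambda(A^{-i}) = 0$, so that the formula is being applied at genuine error positions.
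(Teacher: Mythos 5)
Your proof is correct and takes essentially the same route as the paper's: evaluate $L(X)$ at $A^{-i}$, observe that every term indexed by $k \neq i$ vanishes because of the factor $1 - A^i A^{-i} = 0$, rewrite $A^i(1 - A^j A^{-i}) = A^i - A^j$, and invert the product using $\det(A^i - A^j) \neq 0$ from the definition of a primitive $m$-th root of unity. Your explicit care about replacing $y_k$ by $e_k$ and about the compatibility of $\diamond$ with evaluation at powers of $A$ only spells out steps the paper performs silently.
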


\begin{proof}
  Let $i_0 \in W$. We have
  \begin{align*}
    L(A^{-i_0}) &= \sum_{i = 1}^{w} \prod_{j \neq i}^{w}
                   A_i (1 - A^{-i_0} A_j) y_i\\
                &= \prod_{j \in W \setminus \{i_0\}}
                   A^{i_0} (1 - A^{-i_0} A^j) e_{i_0}\\
                &= \prod_{j \in W \setminus \{i_0\}}
                   (A^{i_0} - A^j) e_{i_0}.
  \end{align*}
  By definition of $A$, $A^{i_0} - A^{j}$ is invertible for all
  $j \in W$ hence the result.
\end{proof}

\begin{algorithm}
\label{al:DecodingQCBCH}
\caption{Decoding algorithm for quasi-BCH codes}
\begin{algorithmic}
  \REQUIRE{The received word $y = c + e$ where
           $c \in \C$ and $w(e) \leq \halfdis$.}
  \ENSURE{The codeword $c$, if it exists such that
          $d(y,c) \leq \halfdis$.}
  \STATE $S_{\delta}(X) \gets$ Syndrome of $y$.
  \STATE Compute $\Lambda(X)$ and $L(X)$
         (Subsection~\ref{sss:solving_ke}).
  \STATE $\mathfrak{R} \gets$ roots of $\Lambda(X)$ in $\F_{q^s}[A]$.
  \STATE $W \gets \{i | A^{-i} \in \mathfrak{R} \}$.
  \STATE $\zeta \gets (0,\ldots,0)$.
  \FOR{$i \in W$}
    \STATE $\zeta_i = \prod_{j \in W \setminus \{i\}}
                      (A^{i} - A^{j})^{-1} L(A^{-i})$.
  \ENDFOR
  \RETURN $y - \zeta$.
\end{algorithmic}
\end{algorithm}

\section{Evaluation codes}
\label{Sec:Eval}
\subsection{Definition and parameters}

In this subsection we generalize evaluation codes. For any ring $R$
and any positive integer $k$, we denote by $R[X]_{<k}$ the left
$R$-module of all polynomials of $R[X]$ of degree at most $k - 1$.

\begin{proposition}
\label{FqAField}
  Let $q$ be a prime power and $\ell,m$ be positive integers such
  that $m = q^{\ell} - 1$. Let $A \in M_{\ell}(\F_q)$ be a primitive
  $m$-th root of unity. Then $\F_q[A]$ and $\F_{q^{\ell}}$ are
  isomorphic as rings.
\end{proposition}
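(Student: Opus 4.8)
The plan is to exhibit an explicit ring isomorphism $\F_q[A] \to \F_{q^\ell}$ by sending $A$ to a primitive $m$-th root of unity $\alpha \in \F_{q^\ell}$. First I would recall that since $A$ is a primitive $m$-th root of unity in $M_\ell(\F_q)$ with $m = q^\ell - 1$, the minimal polynomial $\mu_A(X) \in \F_q[X]$ of $A$ divides $X^m - 1$; in particular $\mu_A$ is separable and splits into distinct linear factors over $\F_{q^\ell}$ (whose nonzero elements are exactly the $m$-th roots of unity). I claim $\mu_A$ is irreducible of degree exactly $\ell$. For the degree bound, note $\F_q[A]$ is a commutative subring of $M_\ell(\F_q)$, hence an $\F_q$-algebra of dimension $\deg \mu_A \leq \ell$ (the powers $I, A, A^2, \dots$ satisfy a degree-$\ell$ relation by Cayley–Hamilton, but more simply $\dim_{\F_q}\F_q[A] = \deg \mu_A$ and this is at most $\ell$ since... actually the cleanest route is the counting argument below).

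The key step is a counting argument showing $\F_q[A]$ is a field. Since $A^m = I_\ell$ and the powers $A^0, A^1, \dots, A^{m-1}$ are pairwise distinct (as $A^i = A^j$ with $i \ne j$ would force $\det(A^i - A^j) = 0$, contradicting the third condition in Definition~\ref{defi:racprim}), the set $\{A^0, \dots, A^{m-1}\}$ together with $0$ consists of $m + 1 = q^\ell$ elements of $\F_q[A]$. On the other hand $\F_q[A]$ is a finite commutative $\F_q$-algebra, so $\#\F_q[A] = q^d$ where $d = \dim_{\F_q}\F_q[A] = \deg \mu_A$. Hence $q^d \geq q^\ell$, giving $d \geq \ell$. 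Combined with $d = \deg\mu_A \le \ell$ (which follows because $\mu_A$ divides the characteristic polynomial of $A$, of degree $\ell$), we get $d = \ell$ and $\#\F_q[A] = q^\ell$, and moreover every nonzero element of $\F_q[A]$ is a power of $A$, hence a unit. Therefore $\F_q[A]$ is a commutative ring of order $q^\ell$ in which every nonzero element is invertible, i.e.\ a field with $q^\ell$ elements.

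Finally I would conclude by uniqueness of finite fields: any field with $q^\ell$ elements is isomorphic to $\F_{q^\ell}$, so $\F_q[A] \cong \F_{q^\ell}$ as rings. (Equivalently, one sends the class of $X$ in $\F_q[X]/(\mu_A(X)) \cong \F_q[A]$ to any root of the irreducible polynomial $\mu_A$ in $\F_{q^\ell}$.) The main obstacle is the degree computation $\deg \mu_A = \ell$; I expect the cleanest argument is the one above, bounding $\deg\mu_A \le \ell$ via Cayley–Hamilton and $\deg\mu_A \ge \ell$ via the fact that $\F_q[A]$ contains the $q^\ell$ distinct elements $0, I_\ell, A, \dots, A^{m-1}$. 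Everything else is routine, and no essential use of the hypothesis $A \in M_\ell(\F_q)$ (as opposed to $M_\ell(\F_{q^s})$) is needed beyond ensuring the resulting field sits inside $M_\ell(\F_q)$.
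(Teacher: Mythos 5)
Your proof is correct, but it follows a genuinely different route from the paper's. The paper works with the minimal polynomial $\mu$ of $A$: since $\mu \mid X^m-1$ its roots are distinct, and condition (3) of Definition~\ref{defi:racprim} forces each root (eigenvalue of $A$) to have multiplicative order exactly $m=q^{\ell}-1$, i.e.\ to be a primitive element of $\F_{q^{\ell}}$ lying in no proper subfield; as $\deg\mu\le\ell$, this makes $\mu$ irreducible of degree $\ell$ and gives $\F_q[A]\cong\F_q[X]/(\mu)\cong\F_{q^{\ell}}$. You instead run a counting argument: the powers $A^0,\dots,A^{m-1}$ are pairwise distinct, so $\F_q[A]$ contains at least $m+1=q^{\ell}$ elements, while Cayley--Hamilton gives $\dim_{\F_q}\F_q[A]=\deg\mu_A\le\ell$, hence $\#\F_q[A]=q^{\ell}$, $\F_q[A]=\{0\}\cup\{A^i\}$, every nonzero element is a unit, and $\F_q[A]$ is a field of order $q^{\ell}$, isomorphic to $\F_{q^{\ell}}$ by uniqueness of finite fields. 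Your version is more elementary (no eigenvalues, no statement about where the roots of $\mu$ live) and only uses that $A$ has order exactly $m$ in $\GL_{\ell}(\F_q)$ (conditions (1)--(2)); as a byproduct it shows every nonzero element of $\F_q[A]$ is a power of $A$, and that condition (3) is then automatic, since a nonzero difference $A^i-A^j$ lies in a field inside $M_{\ell}(\F_q)$ and is therefore invertible. The paper's version buys a concrete description of the isomorphism through an explicit irreducible $\mu$ of degree $\ell$ and the fact that the eigenvalues are primitive elements. One small caveat on your closing aside: the hypothesis $A\in M_{\ell}(\F_q)$ \emph{is} used essentially, namely in the bound $\deg\mu_A\le\ell$, which relies on the characteristic polynomial of $A$ being a degree-$\ell$ polynomial over $\F_q$; for $A\in M_{\ell}(\F_{q^s})$ the minimal polynomial over $\F_q$ need not satisfy this bound.
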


\begin{proof}
  Let $\mu(X)$ be the minimal polynomial of $A$ of degree at
  most~$\ell$. We have $\mu | X^m - 1$, thus the roots of $\mu$
  are all distinct. By Definition~\ref{defi:racprim}-~(3), the
  roots of $\mu$ lie in $\F_{q^{\ell}}$ and not in any subfield.
  Therefore $\mu$ is irreducible.
\end{proof} 

\begin{definition}[Quasi-cyclic evaluation codes]
\label{defi:evalcode}
  Let $\ell$ be a positive integer and $q$ be a prime power. Let
  $m = q^{\ell} - 1$ and $k \leq m$. Let $A \in M_\ell(\F_q)$ a
  primitive $m$-th root of unity.
  Let $\pi$ be a $\F_q$-linear map from $\F_q[A]$
  into $\F_q^{\ell}$. We denote by $C_{A,k,\pi}$ the image of:
  \begin{equation*}
    \begin{array}{ccccc}
      (\F_q[A])[X]_{<k} & \stackrel{\ev_A}{\longrightarrow} &
        (\F_q[A])^m & \stackrel{\pi^{\times m}}{\longrightarrow} &
          (\F_q^\ell)^m \\
      P(X) & \longmapsto & \left(P(A^0),\ldots,P(A^{m-1})\right) &
        \longmapsto &
          \left(\pi(P(A^0)),\ldots,\pi(P(A^{m-1}))\right).
    \end{array}
  \end{equation*}
\end{definition}

\begin{proposition}
  Taking the notation of Definition~\ref{defi:evalcode},
  $C_{A,k,\pi}$ is a $\ell$-quasi cyclic code over $\F_q$ of
  length $m\ell$ and of dimension over $\F_q$ at least 
  $k \ell - \dim_{\F_q}(\ker \pi^{\times m})$.
\end{proposition}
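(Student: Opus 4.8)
The plan is to analyze the two maps in the composition defining $C_{A,k,\pi}$ separately and then combine their effects on dimension and on the quasi-cyclic structure. First I would verify that $C_{A,k,\pi}$ is $\ell$-quasi-cyclic of length $m\ell$. Length is clear: the image lies in $(\F_q^{\ell})^m$, which is $\F_q^{m\ell}$. For the quasi-cyclic property, the key observation is that $\ev_A$ intertwines multiplication by $X$ on $(\F_q[A])[X]_{<k}$ (reduced suitably) with a cyclic shift on $(\F_q[A])^m$: since $A^m = I_\ell$, evaluating $X\cdot P(X)$ at the points $A^0,\dots,A^{m-1}$ permutes the coordinates cyclically, because $A^{j+1}$ for $j = m-1$ wraps to $A^0$. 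Concretely, if $(P(A^0),\dots,P(A^{m-1}))$ is in the image of $\ev_A$ restricted to degree $<k$, one checks that its cyclic shift is $(Q(A^0),\dots,Q(A^{m-1}))$ for a suitable $Q$ also of degree $<k$ — here one uses that $X^m$ acts as the identity on evaluations, so reducing $XP(X)$ modulo $X^m-1$ does not change its evaluation vector and keeps the degree bounded. Applying $\pi^{\times m}$ coordinatewise, a block shift $T^{\ell}$ on $(\F_q^{\ell})^m$ corresponds exactly to the coordinate shift on $(\F_q[A])^m$, so $C_{A,k,\pi}$ is stable under $T^{\ell}$.

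Next I would handle the dimension bound by the rank–nullity theorem applied twice. The source module $(\F_q[A])[X]_{<k}$ is a free $\F_q[A]$-module of rank $k$; since $\F_q[A] \cong \F_{q^{\ell}}$ by Proposition~\ref{FqAField}, it has $\F_q$-dimension $k\ell$. The code $C_{A,k,\pi}$ is the image of this space under $\pi^{\times m} \circ \ev_A$, an $\F_q$-linear map, so
\begin{equation*}
  \dim_{\F_q} C_{A,k,\pi} = k\ell - \dim_{\F_q}\left(\ker(\pi^{\times m} \circ \ev_A)\right).
\end{equation*}
To get the stated lower bound it suffices to show $\dim_{\F_q}\ker(\pi^{\times m} \circ \ev_A) \leq \dim_{\F_q}\ker\pi^{\times m}$, which follows if $\ev_A$ is injective on $(\F_q[A])[X]_{<k}$: then $\ker(\pi^{\times m}\circ\ev_A) = \ev_A^{-1}(\ker\pi^{\times m} \cap \im\ev_A)$ injects into $\ker\pi^{\times m}$ via $\ev_A$.

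The crux, and the step I expect to be the main obstacle, is proving that $\ev_A$ is injective on polynomials of degree $< k \leq m$. This is a Vandermonde-type statement over the ring $\F_q[A]$: a nonzero polynomial $P$ of degree $< k$ vanishing at all of $A^0,\dots,A^{m-1}$ would be divisible by $\prod_{i}(X - A^i)$ in a suitable sense, but since $\F_q[A] \cong \F_{q^{\ell}}$ is a field (Proposition~\ref{FqAField}) and the $A^i$ are $m$ distinct elements of that field (by the third condition in Definition~\ref{defi:racprim}, $A^i - A^j$ is invertible for $i \neq j$), a nonzero polynomial of degree $< k \leq m$ over this field has at most $k - 1 < m$ roots. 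Hence $P = 0$, giving injectivity of $\ev_A$. Combining this with the rank–nullity computation above yields $\dim_{\F_q} C_{A,k,\pi} \geq k\ell - \dim_{\F_q}(\ker\pi^{\times m})$, completing the proof.
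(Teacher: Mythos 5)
Your dimension argument is sound and is exactly what the paper leaves implicit behind ``by Proposition~\ref{FqAField} \dots obvious'': since $\F_q[A]\cong\F_{q^{\ell}}$ is a field and $A^0,\dots,A^{m-1}$ are $m$ distinct elements of it, $\ev_A$ is injective on $(\F_q[A])[X]_{<k}$ with $k\le m$, and rank–nullity then gives $\dim_{\F_q}C_{A,k,\pi}\ge k\ell-\dim_{\F_q}(\ker\pi^{\times m})$. The gap is in your proof of quasi-cyclicity: the mechanism you invoke is wrong. Multiplication by $X$ does \emph{not} intertwine with the cyclic shift of the evaluation vector: $\ev_A\bigl(XP(X)\bigr)=\bigl(A^{0}P(A^{0}),A^{1}P(A^{1}),\dots,A^{m-1}P(A^{m-1})\bigr)$, which is a coordinatewise ``twist'' by $A^{j}$, not the shifted vector $\bigl(P(A^{1}),\dots,P(A^{m-1}),P(A^{0})\bigr)$; the identification ``shift $\leftrightarrow$ multiplication by $X$ mod $X^m-1$'' is valid for the coefficient-vector presentation of cyclic codes, not for the evaluation presentation used here. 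Moreover, even as a side point, reducing $XP(X)$ modulo $X^m-1$ generally leaves a polynomial of degree $k$, not $<k$, so your ``suitable $Q$ of degree $<k$'' is not produced by that recipe.

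The correct intertwining operation is the substitution $X\mapsto AX$: setting $Q(X)=P(AX)$, one has $\deg Q=\deg P<k$ and $Q(A^{j})=P(A^{j+1})$ for all $j$ (using $A^m=I_\ell$ for the wrap-around), so the block shift $T^{\ell}$ of $\pi^{\times m}(\ev_A(P))$ equals $\pi^{\times m}(\ev_A(Q))$ and the image is stable. This is precisely the paper's proof, which writes $P(X)=\sum_{i,j}P_{ij}A^{j}X^{i}$ and exhibits $Q(X)=\sum_{i,j}P_{ij}A^{j+i}X^{i}$ with $Q(A^{i})=P(A^{i+1})$. With that one replacement your argument goes through; the rest (length, injectivity of $\ev_A$, and the kernel comparison $\ker(\pi^{\times m}\circ\ev_A)\hookrightarrow\ker\pi^{\times m}$) is correct.
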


\begin{proof}
  By Proposition~\ref{FqAField} the statement about the dimension of
  $C_{A,k,\pi}$ is obvious. Let
  \begin{equation*}
    P(X) = \sum_{i = 0}^{k - 1} \sum_{j = 0}^{m - 1} P_{ij} A^j X^i
    \in \F_q[A][X]_{<k}
  \end{equation*}
  with $P_{ij} \in \F_q$.
  Then
  \begin{equation*}
    Q(X) = \sum_{i = 0}^{k - 1} \sum_{j = 0}^{m - 1} P_{ij}
           A^{j + i} X^i
    \in \F_q[A][X]_{<k}
  \end{equation*}
  is such that $Q(A^i) = P(A^{i + 1})$ for all $i \in \Z$
  and $C_{A,k,\pi}$ is $\ell$-quasi cyclic.
\end{proof}

\subsection{New good codes}

\begin{proposition}
  Using the notation of Definition~\ref{defi:evalcode},
  if $\pi$ is such that for $B=(b_{ij}) \in \F_q[A]$
  \begin{itemize}
    \item $\pi(B)=(b_{i1},\ldots,b_{i\ell})$ for some $i$,
    \item or $\pi(B)=(b_{1j},\ldots,b_{\ell j})$ for some $j$,
  \end{itemize}
  then $\dim C_{A,k,\pi} \geq k \ell$ and $C_{A,k,\pi}$ has minimum
  distance $d \geq m - k + 1$.
\end{proposition}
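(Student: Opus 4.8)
The plan is to reduce the statement to the classical minimum distance bound for Reed--Solomon codes. First I would establish the dimension claim: under either hypothesis on $\pi$, the map $\pi$ is $\F_q$-linear and injective. Indeed, if $\pi(B) = (b_{i1},\ldots,b_{i\ell})$ extracts the $i$-th row of $B$, then since every element of $\F_q[A]$ is a polynomial in $A$, and $A$ is a primitive $m$-th root of unity, the row (or column) map is injective on $\F_q[A]$ by Proposition~\ref{FqAField} together with the fact that a nonzero element of $\F_q[A] \cong \F_{q^{\ell}}$ has a nonzero row and a nonzero column (the only matrix in $\F_q[A]$ with a zero row is the zero matrix, since $\F_q[A]$ is a field and hence contains no nonzero zero-divisors, and a matrix with a zero row is singular). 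Thus $\ker \pi^{\times m} = 0$ and the preceding proposition gives $\dim C_{A,k,\pi} = k\ell - 0 = k\ell$.

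For the minimum distance, the key observation is that $\ev_A$ followed by the row (resp.\ column) extraction behaves coordinatewise like the evaluation of a Reed--Solomon code over $\F_{q^{\ell}} \cong \F_q[A]$. Concretely, take a nonzero $P(X) \in \F_q[A][X]_{<k}$. Viewing $P$ as a polynomial over the field $\F_q[A]$ of degree at most $k-1$, it has at most $k-1$ roots among $A^0,\ldots,A^{m-1}$, so $P(A^j) \neq 0$ for at least $m - k + 1$ indices $j$. For each such $j$, the matrix $P(A^j) \in \F_q[A]$ is nonzero, hence (as argued above) has a nonzero $i$-th row and a nonzero $j'$-th column, so $\pi(P(A^j)) \neq 0$ in $\F_q^{\ell}$. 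Therefore the codeword $(\pi(P(A^0)),\ldots,\pi(P(A^{m-1})))$ has $\ell$-block weight at least $m - k + 1$, and in particular Hamming weight at least $m - k + 1$. Since this holds for every nonzero $P$, and $\pi^{\times m} \circ \ev_A$ is injective, every nonzero codeword of $C_{A,k,\pi}$ has weight at least $m - k + 1$, i.e.\ $d \geq m - k + 1$.

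The main obstacle is the lemma that a nonzero element of $\F_q[A] \subseteq M_{\ell}(\F_q)$ has every row and every column nonzero. This is where Proposition~\ref{FqAField} does the real work: $\F_q[A]$ is a field, so it has no nonzero singular elements, and a matrix with an all-zero row or an all-zero column is singular; hence any nonzero $B \in \F_q[A]$ has all rows and all columns nonzero. Everything else is a routine transfer of the Reed--Solomon argument through the ring isomorphism $\F_q[A] \cong \F_{q^{\ell}}$, using that $\ev_A$ is exactly the Reed--Solomon evaluation map at the $m = q^{\ell} - 1$ distinct points $A^0, \ldots, A^{m-1}$. One should also note that $d \leq m - k + 1$ would follow from the Singleton bound applied to the folded code, so in fact $C_{A,k,\pi}$ is an $[m\ell, k\ell, \geq m-k+1]_{\F_q}$ code whose folded code is MDS; but only the lower bound is claimed in the statement.
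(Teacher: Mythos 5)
Your proof is correct and follows essentially the same route as the paper: the decisive point in both is that a nonzero element of the field $\F_q[A]$ is invertible and therefore has no zero row or column, so $\pi$ (and hence $\pi^{\times m}$) is injective, the dimension bound follows from the preceding proposition, and the distance bound is the standard root-counting argument for a nonzero polynomial of degree less than $k$ over $\F_q[A] \cong \F_{q^{\ell}}$ evaluated at the $m$ distinct points $A^0,\ldots,A^{m-1}$ (which the paper leaves implicit). Only your closing aside is inaccurate: the Singleton bound applied to the folded code bounds the block distance, not the Hamming distance $d$ of $C_{A,k,\pi}$ over $\F_q$, so $d \leq m-k+1$ does not follow --- but this is immaterial to the statement being proved.
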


\begin{proof}
  In both cases, it suffices to notice that $\pi^{\times m}$ is
  injective. If $\pi^{\times m}(B_1,\ldots,B_m) = 0$ then
  $\det B_i = 0$ for $i = 1,\ldots,m$. As $\F_q[A]$ is a field we must
  have $B_i = 0$ for $i = 1,\ldots,m$. In fact under the assumptions
  of the proposition $\pi^{\times m}$ is an isomorphism since
  $\#((F_q[A])^m) = q^{m \ell} = \#((F_q^{\ell})^m)$.
\end{proof}

\begin{remark}
\label{rem:new_codes}
  \begin{enumerate}
    All the computations of the examples below have been performed
    with the \textsc{magma} computer algebra system \cite{magma}.
    \item \label{rem:item:ze_one}
          For some particular choices of $\pi$, especially when we
          decrease the dimension~$k$, we observe that the minimum
          distance is multiplied by $\ell - 1$. For example, with
          \begin{equation*}
            A =
            \begin{pmatrix}
              0      & \omega   & 0 \\
              \omega & \omega^2 & \omega^2 \\
              1      & \omega^2 & 1
            \end{pmatrix} \in M_3(\F_4)
            \text{ with } \F_4 = \F_2[\omega],
          \end{equation*}
          $k = 4$ and $\pi((b_{ij}))=(b_{2,1},b_{1,2},b_{2,3})$,
          we find a $[189,11,125]_{\F_4}$-code. According to
          \cite{codetables}, the previous best known minimum
          distance was $121$.
    \item \label{rem:item:rs}
          As for Reed-Solomon codes, we can evaluate polynomials of
          $(\F_q[A])[X]_{<k}$ at less than $m = q^{\ell} - 1$ points.
          Following this approach, we find the following new good
          codes listed below together with the corresponding previous
          best known minimum distances:
          \begin{center}
            $[186, 11, 122]_{\F_4}$, 120; \\
            $[183, 11, 119]_{\F_4}$, 117; \\
            $[180, 11, 116]_{\F_4}$, 114; \\
            $[177, 11, 113]_{\F_4}$, 112. \\
          \end{center}
    \item Markus Grassl applied different methods to construct new codes
          from our $[189, 11, 125]_{\F_4}$ code
          (item~\ref{rem:item:ze_one} of Remark~\ref{rem:new_codes}).
          For example, he used a puncturing method \cite{GrasWhi2004}.
          Some of the codes he obtained have the same parameters as the
          codes listed in item~\ref{rem:item:rs} of
          Remark~\ref{rem:new_codes}. He found 
          $[186, 11, 122]_{\F_4}$, $[183, 11, 119]_{\F_4}$ and
          $[180, 11, 116]_{\F_4}$ codes. He also found a
          $[177, 11, 114]_{\F_4}$ code while the best known
          minimum distance was~$112$.
          The 49 new codes found with the help of Markus Grassl are
          listed in Table~\ref{tab:BestCodes}. All the methods used
          for the construction of these codes are detailed
          in~\cite{codetables}.
          \begin{table}
            \begin{tabular}{|ccccc|}
              \hline
              \multicolumn{5}{|c|}{New codes over $\F_{4}$} \\
              \hline
              \hline
              $[171, 11, 109]_4$ & $[172, 11, 110]_4$ &
              $[173, 11, 110]_4$ &
              $[174, 11, 111]_4$ & $[175, 11, 112]_4$ \\
      
              $[176,11,113]_4$ & $[177,11,114]_4$ &
              $[178,11,115]_4$ &
              $[179,11,115]_4$ & $[180,11,116]_4$ \\
          
              $[181, 11, 117]_4$ & $[182, 11, 118]_4$ &
              $[183, 11, 119]_4$ &
              $[184, 10, 121]_4$ & $[184, 11, 120]_4$ \\
          
              $[185,10,122]_4$ & $[185,11,121]_4$ &
              $[186,10,123]_4$ &
              $[186,11,122]_4$ & $[187,10,124]_4$ \\
          
              $[187,11,123]_4$ & $[188,10,125]_4$ &
              $[188,11,124]_4$ &
              $[189,10,126]_4$ & $[189,11,125]_4$ \\
              
              $[190,10,127]_4$ & $[190,11,126]_4$ &
              $[191,10,128]_4$ &
              $[191,11,127]_4$ & $[192,11,128]_4$ \\
          
              $[193,11,128]_4$ & $[194,11,128]_4$ &
              $[195,11,128]_4$ &
              $[196,11,129]_4$ & $[197,11,130]_4$ \\
          
              $[198,11,130]_4$ & $[199,11,131]_4$ &
              $[200,11,132]_4$ &
              $[201,10,133]_4$ & $[201,11,132]_4$ \\
          
              $[202,10,134]_4$ & $[202,11,132]_4$ &
              $[203,10,135]_4$ &
              $[204,10,136]_4$ & $[204,11,133]_4$ \\
          
              $[205,11,134]_4$ & $[210,11,137]_4$ &
              $[213,11,139]_4$ & $[214,11,140]_4$ & \\ 
          \hline
        \end{tabular}
        \caption{\label{tab:BestCodes}
          49 new codes over $\F_4$ which have a larger minimum
          distance than the previously known ones.}
        \end{table}
  \end{enumerate}
\end{remark}

\begin{remark}
  We have proved in Proposition~\ref{FqAField} that $\F_q[A]$ is a
  field such that $[\F_q[A]:\F_q] = \ell$. Thus there is a
  $\F_q$-linear isomorphism from $\F_q[A]$ to $\F_q^{\ell}$.
  Consider the following one:
  \begin{equation*}
    \begin{array}{ccc}
      \F_q[A] & \stackrel{\psi}{\longrightarrow} & \F_q^{\ell} \\ 
      B=b_0 I_{\ell} + b_1 A + \cdots + b_{\ell - 1} A^{\ell - 1} &
          \longmapsto & (b_0,b_1,\ldots,b_{\ell - 1}).
    \end{array}
  \end{equation*}
  Then
  \begin{equation*}
    C_{A,k,\psi}=\psi^{\times m}(\ev_A(\F_q[A][X]_{<k}))
  \end{equation*}
  is still an $\ell$-quasi cyclic code of length $m\ell$ and of
  dimension $k\ell$. Let $\Pi \in M_{\ell}(\F_q)$ and let
  \begin{equation*}
    \begin{array}{rcl}
      \pi : \F_q^{\ell} & \rightarrow & \F_q^{\ell} \\
                      x & \mapsto     & x \Pi
    \end{array}
  \end{equation*}
  for a given $\Pi \in M_{\ell}(\F_q)$.
  Then
  \begin{equation*}
    C_{A,k,\psi,\pi} =
      \pi^{\times m}(\psi^{\times m}(\ev_A(\F_q[A][X]_{<k})))
  \end{equation*}
  is an $\ell$-quasi cyclic code of length $m\ell$ and dimension
  $\geq k \ell - \dim(\ker \pi)$.

  We notice that there exist matrices $\Pi$ for which the obtained
  minimum distance is always greater than $m - k + 1$. For instance,
  taking $\ell = 3$, $q = 4$ and the matrix
  \begin{equation*}
    \Pi =
    \begin{pmatrix}
      1        & \omega^2 & \omega \\
      \omega^2 & \omega   & 1 \\
      1        & 1        & 1
    \end{pmatrix},
   \end{equation*}
   give codes with minimum distance close to $2(m - k + 1)$.
\end{remark}

\section{Conclusion}
\label{Sec:Conlu}
In this paper we presented a generalization of results for cyclic
codes to quasi-cyclic codes. We proved that there is a natural
one-to-one correspondence between $\ell$-quasi-cyclic codes and
left ideals of $M_{\ell}(\F_q)[X] / (X^m - 1)$.
We then extended the construction of BCH and evaluation codes
to this context. This generalization allowed us to find a lot of
new codes with good parameters and, sometimes, beating previous
known minimum distances. A deeper study of decoding algorithms for
quasi-BCH need more work and remains an open problem.

\section*{Acknowledgments} 
We would like to thank the referees, whose suggestions have
permit to improve this article, in particular, for
Algorithm~\ref{al:basis-block-rank} and for the idea of using
the Morita equivalence to prove the one-to-one correspondence
between left ideals and quasi-cyclic codes.
We would like to thank Markus Grassl for his precious help for
finding new good codes.

\bibliographystyle{plain}
\bibliography{biblio2}

\begin{thebibliography}{10}

\bibitem{AugBarCou2011}
{D}. {A}ugot, {M}. {B}arbier, and {A}. {C}ouvreur.
\newblock {L}ist-decoding of binary {G}oppa codes up to the binary {J}ohnson
  bound.
\newblock In {\em {I}nformation {T}heory {W}orkshop ({ITW}), 2011 {IEEE}},
  pages 229--233, oct 2011.

\bibitem{BerCayGabOtm2009}
{T}. {B}erger, {P}.-{L}. {C}ayrel, {P}. {G}aborit, and {A}. {O}tmani.
\newblock {R}educing {K}ey {L}ength of the {{M}c{E}liece} {C}ryptosystem.
\newblock In {\em {P}roceedings of the 2nd {I}nternational {C}onference on
  {C}ryptology in {A}frica: {P}rogress in {C}ryptology}, {AFRICACRYPT} '09,
  pages 77--97, {B}erlin, {H}eidelberg, 2009. {S}pringer-{V}erlag.

\bibitem{magma}
{W}ieb {B}osma, {J}ohn {C}annon, and {C}atherine {P}layoust.
\newblock {T}he {M}agma algebra system. {I}. {T}he user language.
\newblock {\em J. Symbolic Comput.}, 24(3-4):235--265, 1997.
\newblock {C}omputational algebra and number theory ({L}ondon, 1993).

\bibitem{BourbakiAlgCom8}
{N}. {B}ourbaki.
\newblock {\em {A}lg\`{e}bre: {C}hapitre 8}.
\newblock {S}pringer {V}erlag, 2011.

\bibitem{CayChaAbd2010}
{P}.-{L}. {C}ayrel, {C}. {C}habot, and {A}. {N}ecer.
\newblock {Q}uasi-cyclic codes as codes over rings of matrices.
\newblock {\em Finite Fields and Their Applications}, 16(2):100--115, 2010.

\bibitem{Chabot2011}
{C}. {C}habot.
\newblock {F}actorisation in {${M}_{\ell}(\mathbb{F}_q)[{X}]$}. {C}onstruction
  of quasi-cyclic codes.
\newblock In {\em {WCC} 2011 - {W}orkshop on coding and cryptography}, pages
  209--218, {P}aris, {F}rance, apr 2011.

\bibitem{FauOtmPerTil2010}
{J}.-{C}. {F}aug\`ere, {A}. {O}tmani, {L}. {P}erret, and {J}.-{P}. {T}illich.
\newblock {A}lgebraic {C}ryptanalysis of {{M}c{E}liece} {V}ariants with
  {C}ompact {K}eys.
\newblock In {H}enri {G}ilbert, editor, {\em {A}dvances in {C}ryptology –
  {EUROCRYPT} 2010}, volume 6110 of {\em {L}ecture {N}otes in {C}omputer
  {S}cience}, pages 279--298. {S}pringer {B}erlin / {H}eidelberg, 2010.

\bibitem{GrasWhi2004}
{M}. {G}rassl and {G}. {W}hite.
\newblock {N}ew good linear codes by special puncturings.
\newblock In {\em {I}nformation {T}heory, 2004. {ISIT} 2004. {P}roceedings.
  {I}nternational {S}ymposium on}, page 454, jun 2004.

\bibitem{codetables}
{M}arkus {G}rassl.
\newblock {B}ounds on the minimum distance of linear codes and quantum codes.
\newblock {O}nline available at http://www.codetables.de, 2007.
\newblock {A}ccessed on 2011-04-19.

\bibitem{LalFitz2001}
{K}. {L}ally and {P}. {F}itzpatrick.
\newblock {A}lgebraic structure of quasicyclic codes.
\newblock {\em {D}iscrete {A}pplied {M}athematics}, 111(1--2):157--175, 2001.

\bibitem{Lang2002}
{S}. {L}ang.
\newblock {\em {A}lgebra}, volume 211 of {\em {G}raduate {T}exts in
  {M}athematics}.
\newblock {S}pringer-{V}erlag, third edition, 2002.

\bibitem{LiDengWang94}
{Y}.~{X}. {L}i, {R}.~{H}. {D}eng, and {X}.~{M}. {W}ang.
\newblock {O}n the equivalence of {M}c{E}liece's and {N}iederreiter's
  public-key cryptosystems.
\newblock {\em IEEE Trans. Inform. Theory}, 40(1):271--273, January 1994.

\bibitem{LinSol2001}
{S}. {L}ing and {P}. {S}ol\'e.
\newblock {O}n the algebraic structure of quasi-cyclic codes .{I}. {F}inite
  fields.
\newblock {\em IEEE Trans. Inform. Theory}, 47(7):2751--2760, nov 2001.

\bibitem{LinSol2003}
{S}. {L}ing and {P}. {S}ol\'{e}.
\newblock {G}ood self-dual quasi-cyclic codes exist.
\newblock {\em IEEE Trans. Inform. Theory}, 49(4):1052--1053, april 2003.

\bibitem{SloMacWil86}
{F}.{J}. {M}ac{W}illiams and {N}.{J}.{A}. {S}loane.
\newblock {\em {T}he theory of error-correcting codes}.
\newblock {N}orth-{H}olland mathematical library. {N}orth-{H}olland, 1986.

\bibitem{McEli78}
{R}. {M}c{E}liece.
\newblock {A} {P}ublic-{K}ey {C}ryptosystem {B}ased {O}n {A}lgebraic {C}oding
  {T}heory.
\newblock {\em {D}eep {S}pace {N}etwork {P}rogress {R}eport}, 44:114--116,
  1978.

\bibitem{Nied86}
{H}. {N}iederreiter.
\newblock {K}napsack-type cryptosystems and algebraic coding theory.
\newblock {\em {P}roblems of {C}ontrol and {I}nformation {T}heory},
  15(2):159--166, 1986.

\bibitem{UmaLea2010}
{V}.~{G}. {U}ma{\~{n}}a and {G}. {L}eander.
\newblock {P}ractical {K}ey {R}ecovery {A}ttacks {O}n {T}wo {M}c{E}liece
  {V}ariants.
\newblock In {C}arlos {C}id and {J}ean-{C}harles {F}aug{\'{e}}re, editors, {\em
  {P}roceedings of the {S}econd {I}nternational {C}onference on {S}ymbolic
  {C}omputation and {C}ryptography}, pages 27--44, June 2010.

\end{thebibliography}

\end{document}